\documentclass[journal]{IEEEtran}

\usepackage[T1]{fontenc}% optional T1 font encoding

% *** CITATION PACKAGES ***
%
\usepackage{cite}
\usepackage{afterpage}

% *** GRAPHICS RELATED PACKAGES ***
%
\ifCLASSINFOpdf
  \usepackage[pdftex]{graphicx}
  \usepackage{epstopdf}
  % declare the path(s) where your graphic files are
  % \graphicspath{{../pdf/}{../jpeg/}}
  % and their extensions so you won't have to specify these with
  % every instance of \includegraphics
  % \DeclareGraphicsExtensions{.pdf,.jpeg,.png}
\else
  \usepackage[dvips]{graphicx}
  % or other class option (dvipsone, dvipdf, if not using dvips). graphicx
  % will default to the driver specified in the system graphics.cfg if no
  % driver is specified.
  % declare the path(s) where your graphic files are
  % \graphicspath{{../eps/}}
  % and their extensions so you won't have to specify these with
  % every instance of \includegraphics
  % \DeclareGraphicsExtensions{.eps}
\fi

% *** MATH PACKAGES ***
\usepackage[cmex10]{amsmath}
\usepackage{bbm}
\usepackage{amssymb}
\usepackage{mathrsfs}
\allowdisplaybreaks
\usepackage{amsthm}
\usepackage[cmintegrals]{newtxmath}

% *** SPECIALIZED LIST PACKAGES ***
%
\usepackage{algorithmic}
\usepackage[algoruled, linesnumbered]{algorithm2e} %added by Sandipan for algorithm writing

% *** ALIGNMENT PACKAGES ***
%
\usepackage{array}

% *** SUBFIGURE PACKAGES ***
\ifCLASSOPTIONcompsoc
  \usepackage[caption=false,font=normalsize,labelfont=sf,textfont=sf]{subfig}
\else
\usepackage[caption=false,font=footnotesize]{subfig}
\fi

% *** FLOAT PACKAGES ***
\usepackage{fixltx2e}

\usepackage{stfloats}
%\fnbelowfloat

% *** PDF, URL AND HYPERLINK PACKAGES ***
%
\usepackage{url}

\usepackage{xpatch}
\usepackage{tikz}
\usetikzlibrary{patterns}
\usepackage{setspace}
\usepackage{amsthm}
\allowdisplaybreaks
\usepackage[normalem]{ulem}
\usepackage{color}

\usepackage{xpatch}
\usepackage{tikz}
\usetikzlibrary{patterns}
\usepackage{setspace}
\usepackage{amsthm}
\allowdisplaybreaks
\usepackage[normalem]{ulem}
\usepackage{color}
\usepackage{scalerel}
\newlength\bshft
\bshft=.18pt\relax
\def\fakebold#1{\ThisStyle{\ooalign{$\SavedStyle#1$\cr%
  \kern-\bshft$\SavedStyle#1$\cr%
  \kern\bshft$\SavedStyle#1$}}}

\newtheorem{lemma}{Lemma}[section]

\newtheorem{remark}{Remark}[section]
\allowdisplaybreaks

\begin{document}

% correct bad hyphenation here
\hyphenation{op-tical net-works semi-conduc-tor}

\markboth{To appear in IEEE Transactions on Wireless Communications, 2019}
{Shell \MakeLowercase{\textit{et al.}}: Bare Demo of IEEEtran.cls for Journals}

\title{Continuous Analog Channel Estimation Aided Beamforming for Massive MIMO Systems}
\author{Vishnu~V.~Ratnam,~\IEEEmembership{Member,~IEEE,}
and Andreas~F.~Molisch,~\IEEEmembership{Fellow,~IEEE,}%\footnotemark[1]
\thanks{V. V. Ratnam is with the Standards and Mobility Innovation Lab at Samsung Research America, Plano, TX, 75023 USA. A. F. Molisch is with the Ming Hsieh Department of Electrical and Computer Engineering, University of Southern California, Los Angeles, CA, 90089 USA (e-mail: \{ratnam, molisch\}@usc.edu). This work was supported by the National Science Foundation. Part of this work was presented at IEEE ICC 2018 \cite{Ratnam_ICC2018}.} 
% <-this % stops a space
%\thanks{Manuscript received ; revised }
}

\IEEEtitleabstractindextext{%
\begin{abstract}
Analog beamforming greatly reduces the implementation cost of massive antenna transceivers by using only one up/down-conversion chain. However, it incurs a large pilot overhead when used with conventional channel estimation (CE) techniques. This is because these CE techniques involve digital processing, requiring the up/down-conversion chain to be time-multiplexed across the antenna dimensions. This paper introduces a novel CE technique, called continuous analog channel estimation (CACE), that avoids digital processing, enables analog beamforming at the receiver and additionally provides resilience against oscillator phase-noise. By avoiding time-multiplexing of up/down-conversion chains, the CE overhead is reduced significantly and furthermore becomes independent of the number of antenna elements. In CACE, a reference tone is transmitted continuously with the data signals, and the receiver uses the received reference signal as a matched filter for combining the data signals, albeit via analog processing. We propose a receiver architecture for CACE, analyze its performance in the presence of oscillator phase-noise, and derive near-optimal system parameters and power allocation. Transmit beamforming and initial access procedure with CACE are also discussed. Simulations confirm that, in comparison to conventional CE, CACE provides phase-noise resilience and a significant reduction in the CE overhead, while suffering only a small loss in signal-to-interference-plus-noise-ratio. 
\end{abstract}

% Note that keywords are not normally used for peerreview papers.
\begin{IEEEkeywords}
Hybrid beamforming, analog beamforming, massive MIMO, channel estimation, analog channel estimation, initial access, carrier recovery, carrier arraying.
\end{IEEEkeywords}}

% make the title area
\maketitle

\IEEEdisplaynontitleabstractindextext

\IEEEpeerreviewmaketitle

\section{Introduction} \label{sec_intro}
Massive multiple-input-multiple-output (MIMO) systems, where the transmitter (TX) and/or receiver (RX) are equipped with a large array of antenna elements, are considered a key enabler of 5G cellular technologies due to the massive beamforming and/or spatial multiplexing gains they offer \cite{Marzetta2010, Boccardi2014}. This technology is especially attractive at millimeter (mm) wave and terahertz (THz) frequencies, where the massive antenna arrays can be built with small form factors, and where the resulting beamforming gain can help compensate for the large channel attenuation.
Despite the numerous benefits, full complexity massive MIMO transceivers, where each antenna has a dedicated up/down-conversion chain, are hard to implement in practice. This is due to the cost and power requirements of the up/down-conversion chains -- which include expensive and power hungry circuit components such as the analog-to-digital converters (ADCs) and digital-to-analog converters \cite{Murmann_ADC_compiled}. A key solution to reduce the implementation costs of massive MIMO while retaining many of its benefits is Hybrid Beamforming \cite{Molisch_HP_mag, Heath2016, Alkhateeb2015, Sohrabi2016, Ratnam_HBwS_jrnl, Park2017}, wherein a massive antenna array is connected to a smaller number of up/down-conversion chains via the use of analog hardware, such as phase-shifters and switches. While being comparatively cost and power efficient, the analog hardware can focus the transmit/receive power into the dominant channel directions, thus minimizing the performance loss in comparison to full complexity transceivers. In this paper, we focus on a special case of hybrid beamforming with only one up/down-conversion chain (for the in-phase and quadrature signal components each), referred to as analog beamforming.

A major challenge for analog beamforming (and also hybrid beamforming in general) is the acquisition of channel state information required for beamforming, henceforth referred to as rCSI. The rCSI may include, for example, average/statistical channel parameters in some beamformer designs \cite{Sudarshan2006, Caire2017, Park2017} and instantaneous parameters in some other designs \cite{Alkhateeb2015, Sohrabi2016, Vishnu_ICC2017}. The rCSI is commonly obtained by transmitting known signals (pilots) at the TX and performing channel estimation (CE) at the RX, at least once per rCSI coherence time $T_{\rm rcoh}$ -- which is the duration for which the rCSI remains approximately constant. 
Since one down-conversion chain has to be time-multiplexed across the RX antennas for CE in analog beamforming, several pilot re-transmissions are required for rCSI acquisition \cite{Alkhateeb2014, Jeong2015, Vishnu_Globecom, Caire2017}. 
As an example, exhaustive CE approaches \cite{Jeong2015} require ${\rm O}(M_{\rm tx} M_{\rm rx})$ pilots per $T_{\rm rcoh}$, where $M_{\rm tx}, M_{\rm rx}$ are the number of TX and RX antennas, respectively and ${\rm O}(\cdot)$ represents the scaling behavior in the big-`o' notation. 
Such a large pilot overhead may consume a significant portion of the time-frequency resources when $T_{\rm rcoh}$ is short, such as in vehicle-to-vehicle channels \cite{molisch2009survey}, in systems using narrow TX/RX beams, e.g., massive MIMO systems, or in channels with large carrier frequencies (high Doppler) and high blocking probabilities, e.g., at mm-wave, THz frequencies \cite{Akdeniz2014}. 
The overhead also increases system latency and makes the initial access\footnote{Initial access refers to the process wherein, a user equipment and base-station discover each other, synchronize, and coordinate to initiate communication. \label{note0}} (IA) procedure cumbersome \cite{Barati2015, Li2016, Giordani2016_iter}. As a solution, several fast CE approaches have been proposed in literature, which are discussed below assuming $M_{\rm tx}=1$ for convenience.\footnote{For $M_{\rm tx} > 1$, the pilot overhead increases further, either multiplicatively or additively, by a function of $M_{\rm tx}$, determined by the CE algorithm used at the TX. \label{note1}} 
Side information aided CE approaches utilize spatial/temporal statistics of rCSI to reduce the pilot overhead \cite{Love2014, Adhikary2014, Ratnam_HBwS_jrnl, Vishnu_Globecom}. Compressed sensing based CE approaches \cite{Chi2013, Alkhateeb2014, Park2016_asilomar, Lee2016} exploit the sparse nature of the channel to reduce the number of pilots per $T_{\rm rcoh}$ up to ${\rm O}\big(L \log(M_{\rm rx}/L)\big)$, where $L$ is the number of non-zero components of the channel in a certain basis. Iterative angular domain CE uses progressively narrower search beams at the RX to reduce the required pilots per $T_{\rm rcoh}$ to ${\rm O}(\log M_{\rm rx})$ \cite{kim2014fast, Desai2014, Giordani2016_iter}. Approaches that utilize side information to enhance iterative angular domain CE \cite{Giordani2016_context, Devoti2016} or perform angle domain tracking \cite{Gao2017, Li2017, Alexandropoulos2017} have also been considered. Sparse ruler based approaches exploit the possible Toeplitz structure of the spatial correlation matrix to reduce pilots per $T_{\rm rcoh}$ to ${\rm O}(\sqrt{M_{\rm rx}})$ \cite{Pal2010, Pal2011, Romero2013, Rial2015_icassp, Caire2017}. For $M_{\rm tx} > 1$, matrix completion based techniques \cite{Vlachos2018} use the low channel rank to reduce pilots per $T_{\rm rcoh}$ up to $ O\big(L (M_{\rm rx}+M_{\rm tx})\big)$, where $L$ is the channel rank. In all these approaches, since the required pilots per $T_{\rm rcoh}$ still scale with $M_{\rm rx}$, they are only partially successful in reducing the CE overhead. Some of these approaches also require side information and/or prior timing/frequency synchronization \cite{Nasir2016, Meng2017}, making them less suitable for the IA procedure. Some approaches also require a long $T_{\rm rcoh}$ that spans the pilot re-transmissions and/or are only applicable for certain antenna array configurations and channel models. Compressed sensing, sparse ruler, and matrix completion approaches may further incurr large computation and/or memory overheads, making them unsuitable for use at user equipments (UEs). Finally, since these CE techniques require time-multiplexing of the up/down-conversion chains, they are prone to the transient effects of the analog hardware \cite{Sands2002}. 

The main reason for the overhead is that conventional CE approaches require processing in the digital domain, while the RX has only one down-conversion chain. 
%%%%%%%%%%%%
Prior to the growth of digital hardware and digital processing capabilities, some legacy systems \cite{Breese1961, Ghose1964, Thompson1976, Golliday1982} used an alternate RX beamforming approach in single path channels (for space communication), that did not require digital processing. In this approach, an analog phase locked loop (PLL) is used to recover the carrier tone from the received signal at each RX antenna, and the recovered carrier is then used for down-converting the received signal at that antenna to base-band. Since the carrier and data experience the same inter-antenna phase shift (in single path channels), the down-conversion leads to compensation of this phase shift, enabling coherent combining of the signals from each antenna (i.e., beamforming).  
Note that carrier recovery at each RX antenna can be interpreted as an implicit estimation of the channel phase using analog hardware.\footnote{The difference between `isolation/recovery' and `estimation' is somewhat vague in the context of analog signal processing. \label{note_vague}} We shall refer to such schemes that use only analog hardware to acquire the rCSI as \emph{analog channel estimation} (ACE) techniques. 
As ACE does not involve digital processing, it avoids time-multiplexing of the down-conversion chain and shows potential in reducing the CE overhead for analog beamforming. 
The delay domain counterpart of ACE was also explored for single antenna ultra-wideband systems, referred to as transmit reference schemes \cite{Goeckel2007, Ratnam_Globecom2017}. 
However, these legacy ACE systems only exploit the carrier phase but not its amplitude, and thus are not directly applicable to multi-path channels.
Additionally, they involve recovery of the carrier at the RX via a PLL, which is difficult at the low signal-to-noise ratios (SNRs) and high frequencies encountered in mm-wave/THz systems \cite{Proakis2008, Viterbi_book}. The PLL aided recovery may also lead to a high RX phase-noise \cite{Proakis2008, Viterbi_book}, viz., random fluctuation in the instantaneous frequency of the recovered carrier, that degrades the system performance. Finally, prior works do not perform a detailed performance analysis of ACE or explore optimal system parameters. 
Therefore in this paper we explore a more generalized ACE approach for RX beamforming, called continuous ACE (CACE).  
Instead of using PLLs for carrier recovery, the CACE RX uses a local oscillator and low-pass filter combination to isolate/filter-out a received reference/carrier signal. This (i) enables exploiting both the amplitude and phase information of the channel response, which is essential for multi-path channels, (ii) avoids the poor performance of PLL based recovery at low SNRs and (iii) helps compensate for TX/RX oscillator phase noise.

In CACE, a reference tone, i.e. a sinusoidal tone at a known frequency, is continuously transmitted along with the data by the TX, as illustrated in Fig.~\ref{Fig_illustrate_tx_signal}.\footnote{Since this reference need not be at the center frequency of the TX signal, we don't refer to it as the carrier here.} At the RX, the received signal at each antenna is converted to base-band by a bank of mixers and a local oscillator that is tuned (approximately) to the reference frequency, as illustrated in Fig.~\ref{Fig_block_diag_CACE}. The in-phase (I) and quadrature (Q) components of the resulting base-band signal at each antenna are then low-pass filtered to extract the received signals corresponding to the reference, as illustrated in Fig.~\ref{Fig_illustrate_rx_signal}.  
These filtered outputs, which are implicit $\text{estimates}^{3}$ of the channel response (including amplitude and phase) at the reference frequency, are then used as control signals to a variable gain, analog phase-shifter array to generate the RX analog beam. The un-filtered base-band received signals at each antenna are processed by these phase-shifters, added and fed to a single ADC for demodulation.  
As shall be shown, this process emulates using the received signal for the reference as a matched filter for the received data signals, and it achieves a large RX beamforming gain in sparse, wide-band massive MIMO channels. This is because, while the reference and data signals have different frequencies and thus may experience different channel responses, the channel response exhibits a similar spatial signature across frequency, especially for large antenna arrays (see Remark \ref{rem_array_orth}). Furthermore, since any TX/RX oscillator phase-noise affects both the reference and data similarly, the match filtering helps partially mitigate the phase-noise from the demodulation outputs. 
As digital processing and time multiplexing of the down-conversion chain are not required, the computational complexity is low and the CE overhead does not scale with $M_{\rm rx}$. 
Unlike conventional beamformer designs \cite{Sudarshan2006, venkateswaran2010analog, Alkhateeb2015, Li2017}, CACE aided beamforming also improves diversity against multi-path component (MPC) blocking by combining the received signal power from many channel MPCs. 
The phase shifts during the receive mode can also be utilized for transmit beamforming on the reverse link. By providing an option for digitally controlling these phase-shifter inputs, the proposed architecture can also support conventional RX beamforming approaches when required. 

\begin{figure*}[h]
\vspace{-0.2cm} 
\setlength{\unitlength}{0.08in} % selecting unit length 
\centering % used for centering Figure 
\begin{tikzpicture}[x=0.08in,y=0.08in] % picture environment with the size (dimensions) 
\filldraw[color=black, fill=black](-3,7) (-3,7) -- (-2,8) -- (-4,8) -- cycle;
\draw (-3,7) -- (-3,5) -- (-2,5); \draw(-2,4) rectangle (2,6) node[pos=.5] {\small LNA}; \draw(2,4) rectangle (6,6) node[pos=.5] {\small BPF}; \draw[->] (6,5) -- (22,5) -- (22,4); \draw(9.8,6) node{\small $s_{{\rm rx},M_{\rm rx}}(t)$}; 
%\draw[->] (7,5) -- (7,3) -- (8,3); \draw(8,2) rectangle (12,4) node[pos=.5] {PLL}; \draw[->] (12,3) -- (16,3); \draw(16,2) rectangle (19,4) node[pos=.5] {$\pi \!/\! 2$}; \draw(19,3) -- (20,3); \filldraw[fill=black] (14.5,27) circle(0.3); 
\draw[line width = 0.4 mm] (14.5,1) -- (14.5,24);
\filldraw[fill=black] (20,3) circle(0.3); \draw[line width = 0.4 mm] (20,1) -- (20,24); \draw[->] (20,3) -- (21,3); 
\draw (22,3) circle(1); \draw(22,3) node{$\times$}; \draw (23,3) -- (23.5,3); \draw(23.5,2) rectangle (27.5,4) node[pos=.5] {\small LPF}; \draw[->] (27.5,3) -- (29,3) -- (29,1) -- (30,1); \draw(30,0) rectangle (35,2) node[pos=.5] {\small ${\rm LPF}_{\hat{g}}$}; %\draw[->] (27.5,3) -- (38,3); 

\filldraw[fill=black] (14.5,7) circle(0.3); \draw[->] (14.5,7) -- (21,7); \draw[->] (22,5) -- (22,6); 
\draw (22,7) circle(1); \draw(22,7) node{$\times$}; \draw (23,7) -- (23.5,7); \draw(23.5,6) rectangle (27.5,8) node[pos=.5] {\small LPF}; \draw[->] (27.5,7) -- (29,7) -- (29,9) -- (30,9); \draw(30,8) rectangle (35,10) node[pos=.5] {\small ${\rm LPF}_{\hat{g}}$}; %\draw[->] (29,7) -- (38,7); 
\draw[->] (35,1) -- (38,1) -- (44,9); \draw[->] (35,9) -- (38,9) -- (44,1); 
\draw[fill = white, line width = 0.4 mm] (38,2) rectangle (45.5,8); \draw(39,8.5) node{\tiny a}; \draw(39,1.5) node{\tiny b};
\draw (29,7) -- (41,7); \draw (41,6.3) -- (41,7.7) -- (42.5,7) -- cycle; \draw(41.3,7) node{\tiny a}; \draw (38.5,7) -- (38.5, 5.8) -- (39, 5.8); \draw (39,5.1) -- (39,6.5) -- (40.5,5.8) -- cycle; \draw(39.5,5.8) node{\tiny -b};
\draw[->] (42.5,7) -- (44,7); \draw(44.5,7) circle(0.5); \draw(44.5,7) node{+}; \draw[->] (40.5,4.2) -- (41,4.2) -- (44.2,6.7); \draw[->] (45,7) -- (46,7) -- (49.4,18.3);  
\draw (29,3) -- (41,3); \draw (41, 2.3) -- (41,3.7) -- (42.5,3) -- cycle; \draw(41.3,3) node{\tiny a}; \draw(38.5,3) -- (38.5, 4.2) -- (39, 4.2); \draw (39,4.9) -- (39,3.5) -- (40.5,4.2) -- cycle; \draw(39.4,4.2) node{\tiny b}; \draw[->] (42.5,3) -- (44,3); \draw(44.5,3) circle(0.5); \draw(44.5,3) node{+}; \draw[->] (40.5,5.8) -- (41,5.8) -- (44.2,3.3); \draw[->] (45.5,3) -- (46,3) -- (49.4,14.3); 
%%%%%
\draw(9.5,12) node{\vdots}; \draw(25.5,12) node{\vdots}; %\draw(41.5,12) node{\vdots}; 
%%%%%
\filldraw[color=black, fill=black](-3,19) (-3,19) -- (-2,20) -- (-4,20) -- cycle;
\draw (-3,19) -- (-3,17) -- (-2,17); \draw(-2,16) rectangle (2,18) node[pos=.5] {\small LNA}; \draw(2,16) rectangle (6,18) node[pos=.5] {\small BPF}; \draw[->] (6,17) -- (22,17) -- (22,16); \draw(9.5,18) node{\small $s_{{\rm rx},1}(t)$}; 
\draw(6,22) node{\small Oscillator};
\draw (11,22) circle(1); \draw(11,22) node{$\sim$}; \draw[->] (12,22) -- (16,22); \draw(16,21) rectangle (19,23) node[pos=.5] {$\pi \!/\! 2$}; \draw(19,22) -- (20,22); \filldraw[fill=black] (14.5,22) circle(0.3); 
\filldraw[fill=black] (20,15) circle(0.3); \draw[->] (20,15) -- (21,15); \draw (22,15) circle(1); \draw(22,15) node{$\times$}; \draw (23,15) -- (23.5,15); \draw(23.5,14) rectangle (27.5,16) node[pos=.5] {\small LPF}; \draw[->] (27.5,15) -- (29,15) -- (29,13) -- (30,13); \draw(30,12) rectangle (35,14) node[pos=.5] {\small ${\rm LPF}_{\hat{g}}$}; \draw(33.5,15.7) node{\tiny ${\rm Im}{[\tilde{\mathbf{s}}_{\rm rx,BB}(t)]}_{1}$}; \draw[->] (27.5,15) -- (38,15); \draw(39.3,12) node{\tiny ${\rm Im}{[\hat{\mathbf{s}}_{\rm rx,BB}(t)]}_{1}$}; 
%%%%%
\filldraw[fill=black] (14.5,19) circle(0.3); \draw[->] (14.5,19) -- (21,19); \draw[->] (22,17) -- (22,18); \draw (22,19) circle(1); \draw(22,19) node{$\times$}; \draw (23,19) -- (23.5,19); \draw(23.5,18) rectangle (27.5,20) node[pos=.5] {\small LPF}; \draw[->] (27.5,19) -- (29,19) -- (29,21) -- (30,21); \draw(30,20) rectangle (35,22) node[pos=.5] {\small ${\rm LPF}_{\hat{g}}$}; \draw(33.5,18) node{\tiny ${\rm Re}{[\tilde{\mathbf{s}}_{\rm rx,BB}(t)]}_{1}$}; \draw[->] (29,19) -- (38,19); 
\draw[->] (35,13) -- (38,13) -- (44,21); \draw[->] (35,21) -- (38,21) -- (44,13); \draw[fill = white](38,14) rectangle (45.5,20); \draw(41,18) node{\small Phase}; \draw(41.3,16) node{\small Shifter}; \draw[->] (45.5,15) -- (49,15); \draw[->] (45.5,19) -- (49,19); \draw(39.3,21.7) node{\tiny ${\rm Re}{[\hat{\mathbf{s}}_{\rm rx,BB}(t)]}_{1}$}; 
%%%%%
\draw(50,19) circle(1); \draw(50,19) node{$+$}; \draw[->] (51,19) -- (63,19); \draw(57,20) node{\small $\rm{Re}\{y(t)\}$};
\draw(50,15) circle(1); \draw(50,15) node{$+$}; \draw[->] (51,15) -- (63,15); \draw(57,16) node{\small $\rm{Im}\{y(t)\}$};
%%%%%
\draw[dashed] (13,-5) rectangle (28,25); \draw(20.5,-2) node{\small Baseband}; \draw(20.5,-4) node{\small conversion};
\draw[dashed] (28.5,-5) rectangle (46,25); \draw(38,-2) node{\small Amplitude \& Phase}; \draw(38,-4) node{ \small compensation};
\draw(65,15) node[rotate=90] {Digital processing};
\end{tikzpicture}
%
%\vspace{-0.5cm}
\caption{Block diagram of an RX with analog beamforming enabled via CACE.} % title of the Figure 
\label{Fig_block_diag_CACE} % label to refer figure in text 
\end{figure*}
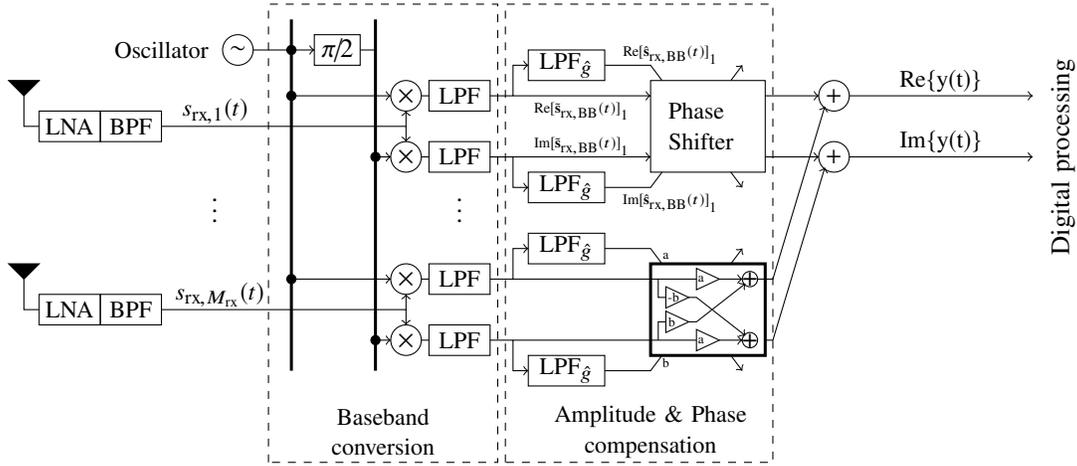 
\begin{figure}[!htb]
\centering
\vspace{-0.5cm}
\subfloat[Transmit signal at TX antenna $m$]{\includegraphics[width= 0.46\textwidth]{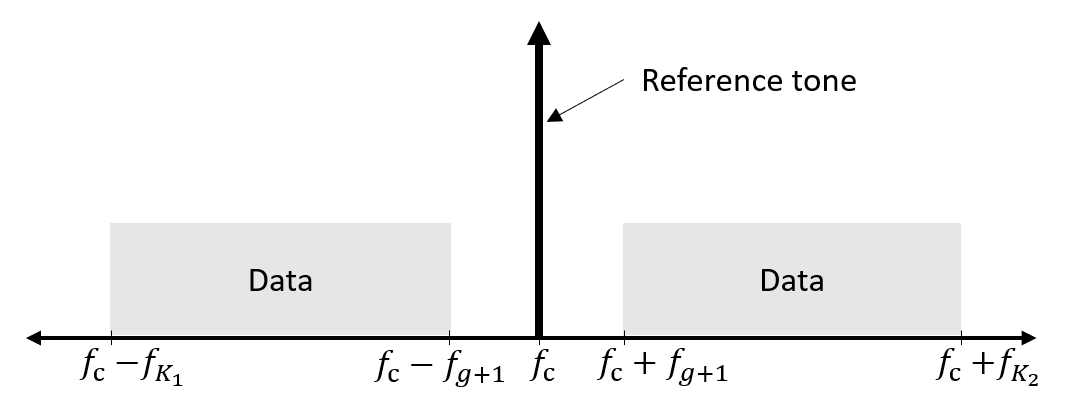} \label{Fig_illustrate_tx_signal}} \\
\subfloat[Baseband signal at RX antenna $m$]{\includegraphics[width= 0.44\textwidth]{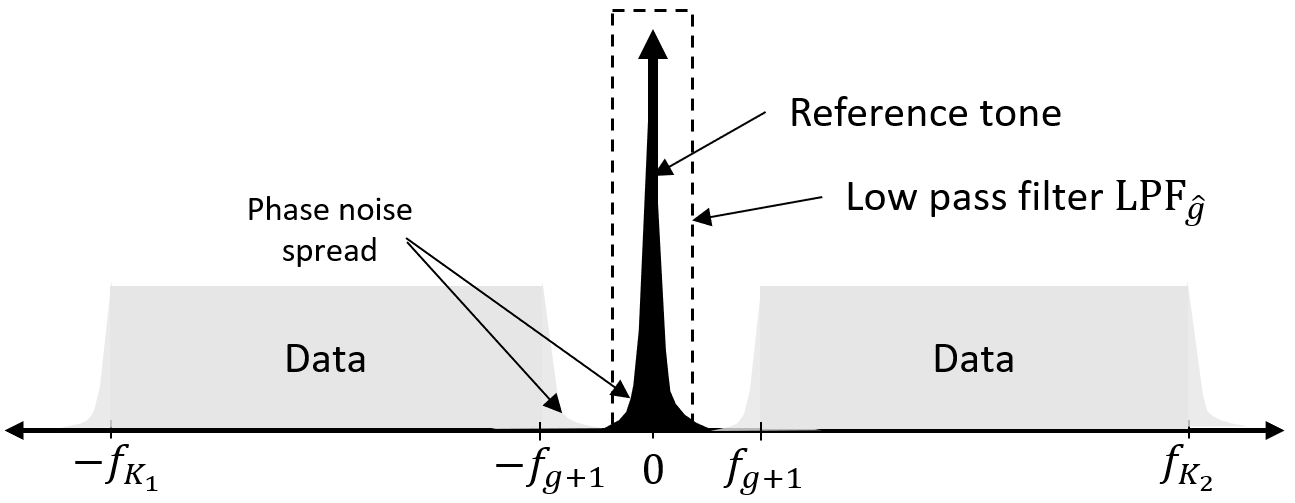} \label{Fig_illustrate_rx_signal}}
\caption{An illustration of the transmit and receive signals (in frequency domain) with CACE aided beamforming.}
\label{Fig_illustrate_signals}
\vspace{-0.8cm}
\end{figure}

On the flip side, CACE may require additional analog hardware in comparison to conventional digital CE, including $2M_{\rm rx}$ mixers and low-pass filters.  
Additionally, the accumulation of power from multiple MPCs, while improving diversity, may cause performance degradation in rich scattering channels, as shall be shown. An improperly designed reference tone may also cause strong inter-carrier interference (ICI) to the data signals due to phase noise. Finally, the proposed approach in its suggested form does not support reception of multiple spatial data streams and can only be used for beamforming at one end of a communication link. This architecture is therefore more suitable for use at the UEs. The possible extensions to multiple spatial stream reception shall be explored in future work.  

Some preliminary results of this work were published in the conference paper \cite{Ratnam_ICC2018}, albeit under a simplified RX model and without a detailed phase-noise analysis. 
As shall be shown here, the phase-noise analysis plays a vital role in CACE performance and system design. 
A different ACE technique that does not require continuous transmission of the reference, called periodic ACE (PACE), was proposed by us in \cite{Ratnam_PACE}. While PACE prevents wastage of transit resources on a continuous reference, it involves PLL aided reference recovery and suffers severe performance degradation at very low SNRs. 
A third ACE technique, called multi-antenna frequency shift reference (MA-FSR), that uses square law components instead of filters and phase-shifters at the RX, was explored by us in \cite{Ratnam_Globecom2018}. While being resilient to phase-noise like CACE and having a low hardware cost, MA-FSR is a non-coherent technique that suffers from a poor bandwidth efficiency of $50\%$. It should be emphasized that CACE, PACE and MA-FSR are three different ACE schemes to help reduce CE overhead in massive MIMO systems, each having their unique advantages and RX architectures, and each requiring separate performance analysis techniques. The detailed analysis presented in this paper for CACE, in combination with the analysis of PACE and MA-FSR in \cite{Ratnam_PACE, Ratnam_Globecom2018}, shall aid in a detailed comparison of these schemes for a specific application. 
The contributions of this paper are as follows:
\begin{enumerate}
\item We propose a novel transmission technique called CACE, that enables RX beamforming with low CE overhead, and propose a corresponding RX architecture for CACE.
\item We analytically characterize the achievable system throughput with CACE aided beamforming in a wide-band channel, and derive near-optimal system parameters. 
\item In the process, the impact of oscillator phase-noise on performance and the ability of CACE to partially suppress phase-noise are explored.
\item Simulations under practically relevant channel models are presented to support the analytical results.
\end{enumerate}
The organization of the paper is as follows: the system model is presented Section \ref{sec_chan_model}; the signal, noise and interference components at the demodulation outputs are analyzed in Section \ref{sec_demod_anal}; the system performance is characterized in Section \ref{sec_perf_anal}; IA and transmit beamforming are discussed in Section \ref{sec_IA_aCSI_at_BS}; simulations results are presented in Section \ref{sec_sim_results} and finally conclusions are drawn in Section \ref{sec_conclusions}. 

\textbf{Notation:} scalars are represented by light-case letters; vectors by bold-case letters; and sets by calligraphic letters. Additionally, ${\rm j} = \sqrt{-1}$, $a^*$ is the complex conjugate of a complex scalar $a$, $|\mathbf{a}|$ represents the $\ell_2$-norm of a vector $\mathbf{a}$, $\mathbf{A}^{\rm T}$ is the transpose of a matrix $\mathbf{A}$ and ${\mathbf{A}}^{\dag}$ is the conjugate transpose of a complex matrix $\mathbf{A}$. Finally, $\fakebold{\mathbb{I}}_a$ is an $a \times a$ identity matrix, $\fakebold{\mathbb{O}}_{a,b}$ is the $a \times b$ all zeros matrix, $\mathbb{E}\{\}$ represents the expectation operator, $\stackrel{\rm d}{=}$ represents equality in distribution, $\mathrm{Re}\{\cdot\}$/$\mathrm{Im}\{\cdot\}$ refer to the real/imaginary component, respectively, and $\mathcal{CN}(\mathbf{a},\mathbf{B})$ represents a circularly symmetric complex Gaussian vector with mean $\mathbf{a}$ and covariance matrix $\mathbf{B}$. 

\section{General Assumptions and System model} \label{sec_chan_model}
We consider a single cell system in downlink, where a $M_{\rm tx}$ antenna base-station (BS) transmits data to multiple UEs simultaneously via spatial multiplexing. 
Since we mainly focus on the downlink, we shall use the terms BS/TX and UE/RX interchangeably. Each UE is assumed to have a hybrid architecture, with $M_{\rm rx}$ antennas and one down-conversion chain, and it performs CACE aided RX beamforming. On the other hand, the BS may have an arbitrary architecture and it transmits a single spatial data-stream to each scheduled UE. For convenience, we consider the use of noise-less and perfectly linear antennas, filters, amplifiers and mixers at the BS and UEs. 
We assume the downlink BS-UE communication to be divided into three stages: (i) Initial Access (IA) (see footnote \ref{note0} on page \pageref{note0}) 
(ii) TX beamformer design - where the TX acquires rCSI for all the UEs and uses it to perform UE scheduling, TX beamforming and power allocation, and (iii) Data transmission - wherein the BS transmits data signals and the scheduled UEs use CACE to adapt the RX beams and receive the data. 
Through a major portion of this paper, we assume that the IA and TX beamformer design have been performed apriori and shall focus on the data transmission stage. %Therefore we shall assume apriori timing, and frequency synchronization.
However in Section \ref{sec_IA_aCSI_at_BS}, we shall also discuss how CACE beamforming can help in stages (i) and (ii). 

In stage (iii), we assume the BS to transmit spatially orthogonal signals to the scheduled UEs to mitigate inter-user interference. This can be achieved, for example, by careful UE scheduling and/or via avoiding transmission to common channel scatterers \cite{Adhikary2014}. For this system model and for a given TX beamformer and power allocation, we shall restrict the analysis to a single representative UE, without loss of generality. The BS is assumed to transmit orthogonal frequency division multiplexing (OFDM) symbols to the representative UE, with $K = K_1+K_2+1$ sub-carriers indexed as $\mathcal{K} = \{-K_1,...,K_2\}$. The $0$-th sub-carrier is used as the reference tone, while data is transmitted on the $K_1-g$ lower and $K_2-g$ higher sub-carriers indexed as $\mathcal{K} \setminus \mathcal{G}$, where $\mathcal{G}=\{-g,..,0,..,g\}$ defines the non-data sub-carriers and $g$ is a design parameter. 
The remaining $2g$ sub-carriers, with indices in $\mathcal{G} \setminus \{0\}$, are blanked to act as a guard band between the reference and data sub-carriers, as illustrated in Fig.\ref{Fig_illustrate_tx_signal}.\footnote{While not considered here explicitly, the results can also be extended to the case of a single-carrier system where the reference tone manifests as an un-suppressed carrier component.} 
Since the BS can afford an accurate oscillator, by ignoring its phase-noise, the \emph{complex equivalent} transmit signal for the $0$-th OFDM symbol of stage (iii) can then be expressed as:
\begin{eqnarray}
\tilde{\mathbf{s}}_{\rm tx}(t) &=& \sqrt{\frac{2}{T_{\rm s}}} \mathbf{t}\bigg[ \sqrt{E^{(\rm r)}} + \sum_{k \in \mathcal{K}\setminus \mathcal{G}} x_k e^{{\rm j} 2 \pi f_k t} \bigg] e^{{\rm j} 2 \pi f_{\rm c} t}, \label{eqn_tx_signal}
\end{eqnarray}
for $-T_{\rm cp} \leq t \leq T_{\rm s}$, where $\mathbf{t}$ is the $M_{\rm tx}\times 1$ unit-norm TX beamforming vector for this UE (designed apriori in stage (ii)), $E^{(\rm r)}$ is the energy-per-symbol allocated to the reference tone, $x_k$ is the data signal on the $k$-th OFDM sub-carrier, $f_{\rm c}$ is the reference frequency, $f_k \triangleq k/T_{\rm s}$ represents the frequency offset of the $k$-th sub-carrier and $T_{\rm s}, T_{\rm cp}$ are the symbol duration and the cyclic prefix duration, respectively. Here we define the \emph{complex equivalent} signal such that the actual (real) transmit signal is given by $\mathbf{s}_{\rm tx}(t) = \mathrm{Re}\{\tilde{\mathbf{s}}_{\rm tx}(t)\}$. 
For the data sub-carriers ($k \in \mathcal{K}\setminus \mathcal{G}$), we assume the use of independent data streams with equal power allocation, and circularly symmetric Gaussian signaling, i.e., $x_k \sim \mathcal{CN}(0, E^{(\rm d)})$. The transmit power constraint is then given by $ E^{(\rm r)} + (K-|\mathcal{G}|) E^{(\rm d)} \leq E_{\rm s}$, where $E_{\rm s}$ is the total OFDM symbol energy (excluding the cyclic prefix).

The channel to the representative UE is assumed to have $L$ MPCs with the $M_{\rm rx} \times M_{\rm tx}$ channel impulse response matrix and its Fourier transform, respectively, given as \cite{Akdeniz2014}:
\begin{subequations} \label{eqn_channel_impulse_resp}
\begin{eqnarray} 
\mathbf{H}(t) &=& \sum_{\ell=0}^{L-1} \alpha_{\ell} \mathbf{a}_{\rm rx}(\ell) {\mathbf{a}_{\rm tx}(\ell)}^{\dag} \delta(t - \tau_{\ell}), \\
\boldsymbol{\mathcal{H}}(f) &=& \sum_{\ell=0}^{L-1} \alpha_{\ell} \mathbf{a}_{\rm rx}(\ell) {\mathbf{a}_{\rm tx}(\ell)}^{\dag} e^{- {\rm j} 2 \pi (f_{\rm c}+ f) \tau_{\ell}},
\end{eqnarray}
\end{subequations}
where $\alpha_{\ell}$ is the complex amplitude and $\tau_{\ell}$ is the delay and $\mathbf{a}_{\rm tx}(\ell), \mathbf{a}_{\rm rx}(\ell)$ are the $M_{\rm tx} \times 1$ TX and $M_{\rm rx} \times 1$ RX array response vectors, respectively, of the $\ell$-th MPC. As an illustration, the $\ell$-th RX array response vector for a uniform planar array with $M_{\rm H}$ horizontal and $M_{\rm V}$ vertical elements ($M_{\rm rx} = M_{\rm H} M_{\rm V}$) is given by $\mathbf{a}_{\rm rx}(\ell) = \bar{\mathbf{a}}_{\rm rx}\big(\psi^{\rm rx}_{\rm azi}(\ell), \psi^{\rm rx}_{\rm ele}(\ell)\big)$, where 
\begin{flalign} \label{eqn_array_response_planar}
&{[\bar{\mathbf{a}}_{\rm rx}(\psi^{\rm rx}_{\rm azi}, \psi^{\rm rx}_{\rm ele})]}_{M_{\rm V}h + v} & \nonumber \\
& = \exp{\Big\{{\rm j} 2 \pi \frac{\Delta_{\rm H} h \sin(\psi^{\rm rx}_{\rm azi})\sin(\psi^{\rm rx}_{\rm ele}) + \Delta_{\rm V} (v - 1) \cos(\psi^{\rm rx}_{\rm ele})}{\lambda} \Big\}}, \!\!\!\!\!\!\! &
\end{flalign}
for $h \in \{0,..,M_{\rm H}-1\}$ and $v \in \{1,..,M_{\rm V}\}$, $\psi^{\rm rx}_{\rm azi}(\ell)$, $\psi^{\rm rx}_{\rm ele}(\ell)$ are the azimuth and elevation angles of arrival for the $\ell$-th MPC, $\Delta_{\rm H}, \Delta_{\rm V}$ are the horizontal and vertical antenna spacings and $\lambda$ is the carrier wavelength. Expressions for $\mathbf{a}_{\rm tx}(\ell)$ can be obtained similarly. Note that in \eqref{eqn_channel_impulse_resp} we implicitly ignore the frequency variation of individual MPC amplitudes $\{\alpha_{0},..,\alpha_{L-1}\}$ and the beam squinting effects \cite{Garakoui2011}, which are reasonable assumptions for moderate system bandwidths. It is emphasized that the complete channel response (including all MPCs) however still experiences frequency selective fading. 
To prevent inter-symbol interference, we let the cyclic prefix be longer than the maximum channel delay: $T_{\rm cp} > \tau_{L-1}$. We also consider a generic temporal variation model, where the time scale at which the MPC parameters $\{\alpha_{\ell}, \mathbf{a}_{\rm tx}(\ell), \mathbf{a}_{\rm rx}(\ell), \tau_{\ell}\}$ change is much larger than the symbol duration $T_{\rm s}$.
Finally, we do not assume any distribution prior or side information on $\{\alpha_{\ell}, \mathbf{a}_{\rm tx}(\ell), \mathbf{a}_{\rm rx}(\ell),\tau_{\ell}\}$. 

Each RX antenna front-end is assumed to have a low noise amplifier (LNA) followed by a band-pass filter (BPF) that leaves the desired signal un-distorted but suppresses the out-of-band noise. The filtered signal is then converted to base-band by multiplication with the I and Q components of an RX local oscillator, as depicted in the base-band conversion block of Fig.~\ref{Fig_block_diag_CACE}. This oscillator is assumed to be independently generated at the RX (i.e., without locking to the received reference). While we model the RX oscillator to suffer from phase-noise, for ease of theoretical analysis we assume the mean RX oscillator frequency to be equal to the reference frequency $f_{\rm c}$. This assumption shall be relaxed later in the simulation results in Section \ref{sec_sim_results}. Then, from \eqref{eqn_tx_signal}-\eqref{eqn_channel_impulse_resp}, the received base-band signal for the $0$-th OFDM symbol is given by:
\begin{flalign}
&\tilde{\mathbf{s}}_{\rm rx, BB}(t) & \nonumber \\
&= {\rm LPF} \Big\{ \sum_{\ell=0}^{L-1} \alpha_{\ell} \mathbf{a}_{\rm rx}(\ell) {\mathbf{a}_{\rm tx}(\ell)}^{\dag} \mathbf{s}_{\rm tx}(t - \tau_{\ell}) \sqrt{2} e^{-{\rm j}[2 \pi f_{\rm c}t + \theta(t)]} \Big\} + \tilde{\mathbf{w}}(t) & \nonumber \\
&= \frac{1}{\sqrt{T_{\rm s}}} \bigg[ \boldsymbol{\mathcal{H}}(0)\mathbf{t} \sqrt{E^{(\rm r)}} + \!\!\!\! \sum_{k \in \mathcal{K}\setminus \mathcal{G}} \!\!\!\! \boldsymbol{\mathcal{H}}(f_k)\mathbf{t} x_k e^{{\rm j} 2 \pi f_k t} \bigg] e^{- {\rm j} \theta(t)} + \tilde{\mathbf{w}}(t), \!\!\!\! & \label{eqn_defn_rx_signal}
\end{flalign}
for $0 \leq t\leq T_{\rm s}$, where the $\mathrm{Re}/\mathrm{Im}$ parts of $\tilde{\mathbf{s}}_{\rm rx, BB}(t)$ are the outputs corresponding to the I and Q components of the RX oscillator, ${\rm LPF}$ represents the low-pass filtering in the base-band conversion procedure, $\theta(t)$ is the phase-noise process of the RX oscillator and $\tilde{\mathbf{w}}(t)$ is the $M_{\rm rx}\times 1$ complex equivalent, base-band, stationary, additive, vector Gaussian noise process, with individual entries being circularly symmetric, independent and identically distributed (i.i.d.), and having a power spectral density: $\mathcal{S}_{\rm w}(f) = \mathrm{N_0}$ for $-f_{K_1} \leq f \leq f_{K_2}$. Note that while \eqref{eqn_defn_rx_signal} is obtained by assuming no TX phase-noise, the results can be generalized under some mild constraints by treating the TX phase-noise as a part of $\theta(t)$ \cite{Mathecken2011}. We model the RX phase-noise $\theta(t)$ as a Wiener process, which is representative of a free running oscillator \cite{Piazzo2002, Wu2006, Petrovic2007}. In Appendix \ref{appdixOU}, we also discuss how the results can be extended to phase-noise modeled as an Ornstein Uhlenbeck (OU) process, which is representative of an oscillator either locked to the received reference, or synthesized from a stable low frequency source \cite{Mehrotra2002, Petrovic2007}. For the Wiener model, $\theta(t)$ is a non-stationary Gaussian process which satisfies:
\begin{eqnarray} \label{eqn_PN_model}
\frac{{\rm d}\theta(t)}{{\rm d}t} = w_{\rm \theta}(t),
\end{eqnarray}
where $w_{\theta}(t)$ is a real white Gaussian process with variance $\sigma^2_{\theta}$. 
We assume the RX to have an apriori knowledge of $\sigma_{\theta}$. 
As illustrated in Fig.~\ref{Fig_block_diag_CACE}, the base-band signal ${[\tilde{\mathbf{s}}_{{\rm rx,BB}}(t)]}_{m}$ at each RX antenna $m$ is further passed through a narrow band low-pass filter: ${\rm LPF}_{\hat{g}}$ to extract the received reference signal. For convenience, ${\rm LPF}_{\hat{g}}$ is assumed to be an ideal rectangular filter with a cut-off frequency of $f_{\hat{g}}= \hat{g}/T_{\rm s}$, %and a $0$-phase response, 
where $\hat{g} \leq g/2$. Neglecting the contribution of the data sub-carriers to the filtered outputs (which is accurate for low phase-noise i.e., $\sigma^2_{\theta} \ll g/T_{\rm s}$), these outputs can be expressed as: 
\begin{eqnarray} \label{eqn_rx_signal_LPF}
\hat{\mathbf{s}}_{\rm rx, BB}(t) &=& \frac{1}{\sqrt{T_{\rm s}}} \boldsymbol{\mathcal{H}}(0) \mathbf{t} \sqrt{E^{(\rm r)}} A(t) + \hat{\mathbf{w}}(t),
\end{eqnarray} 
where $0\leq t \leq T_{\rm s}$, we define $A(t) \triangleq \mathrm{LPF}_{\hat{g}} \{ e^{- {\rm j} \theta(t)} \}$ and $\hat{\mathbf{w}}(t)$ is the $M_{\rm rx}\times 1$ filtered Gaussian noise with power spectral density: $\mathcal{S}_{\rm w}(f) = \mathrm{N_0}$ for $-f_{\hat{g}} \leq f \leq f_{\hat{g}}$. An illustration of this filtering operation is provided in Fig.~\ref{Fig_illustrate_rx_signal}. The aim is to use $\hat{\mathbf{s}}_{\rm rx, BB}(t)$ as the combining weights for the received data signal.
This is accomplished by using $\hat{\mathbf{s}}_{\rm rx, BB}(t)$ as the control signals to a variable gain phase-shifter array, through which the base-band received signal vector $\tilde{\mathbf{s}}_{\rm rx,BB}(t)$ is processed, as shown in Fig.~\ref{Fig_block_diag_CACE}. We assume the filter cut-off frequency $f_{\hat{g}}$ to be small enough to allow the phase-shifters to respond to the slowly time varying control signals $\hat{\mathbf{s}}_{\rm rx, BB}(t)$.\footnote{A more detailed discussion about $\hat{g}$ is considered in Sections \ref{sec_perf_anal} and \ref{sec_sim_results}.} 
The phase-shifter outputs are then summed up to obtain $y(t) = \hat{\mathbf{s}}_{\rm rx, BB}(t)^{\dag} \tilde{\mathbf{s}}_{\rm rx, BB}(t)$, which is further fed to an ADC that samples at $K/T_{\rm s}$ samples/sec to obtain: 
\begin{flalign}
& y[n] \triangleq y(nT_{\rm s}/K) & \nonumber \\
&= \frac{1}{T_{\rm s}} \mathbf{t}^{\dag} {\boldsymbol{\mathcal{H}}(0)}^{\dag} \sqrt{E^{(\rm r)}} \bigg[ \boldsymbol{\mathcal{H}}(0) \mathbf{t} \sqrt{E^{(\rm r)}} & \nonumber \\
& \qquad \qquad + \sum_{k \in \mathcal{K}\setminus \mathcal{G}} \boldsymbol{\mathcal{H}}(f_k) \mathbf{t} x_k e^{{\rm j} 2 \pi k\frac{n}{K}} \bigg] {A^{*}[n] e^{-{\rm j} \theta[n]}} & \nonumber \\
& \ \ + \sqrt{\frac{1}{T_{\rm s}}} {\hat{\mathbf{w}}[n]}^{\dag} \bigg[ \boldsymbol{\mathcal{H}}(0) \mathbf{t} \sqrt{E^{(\rm r)}} + \!\!\!\!\sum_{k \in \mathcal{K}\setminus \mathcal{G}} \!\!\!\! \boldsymbol{\mathcal{H}}(f_k) \mathbf{t} x_k e^{{\rm j} 2 \pi k \frac{n}{K}} \bigg] e^{- {\rm j} \theta[n]} & \nonumber \\
& \ \ + \sqrt{\frac{1}{T_{\rm s}}} \mathbf{t}^{\dag} {\boldsymbol{\mathcal{H}}(0)}^{\dag} \sqrt{E^{(\rm r)}} \tilde{\mathbf{w}}[n] A^{*}[n] + {\hat{\mathbf{w}}[n]}^{\dag} \tilde{\mathbf{w}}[n], & \label{eqn_sampled_Y}
\end{flalign}
for $0\leq n < K$, where we define $A[n] \triangleq A\big(\frac{nT_{\rm s}}{K}\big)$, $\theta[n] \triangleq \theta\big(\frac{nT_{\rm s}}{K}\big)$, $\tilde{\mathbf{w}}[n] \triangleq \tilde{\mathbf{w}}\big(\frac{nT_{\rm s}}{K}\big)$ and $\hat{\mathbf{w}}[n] \triangleq \hat{\mathbf{w}}\big(\frac{nT_{\rm s}}{K}\big)$. Conventional OFDM demodulation is performed on $y[n]$ to demodulate the transmitted data signals, as analyzed in Section \ref{sec_demod_anal}. 

\section{Analysis of the demodulation outputs} \label{sec_demod_anal}
%Without loss of generality, we shall restrict the analysis to the representative $0$-th OFDM symbol and thus, we only focus on $\{y[n] | 0 \leq n < K \}$. 
In this section, we study the demodulation of the sampled signal $y[n]$ for a representative $0$-th OFDM symbol. To this end, we first characterize the statistics of $e^{{\rm j}\theta[n]}$, $\tilde{\mathbf{w}}[n]$ and $A[n]$. The OFDM demodulation outputs are subsequently analyzed in Sections \ref{subsec_signal_anal}--\ref{subsec_noise_anal}. Note that the sampled, band-limited additive noise $\tilde{\mathbf{w}}[n]$ and the sampled RX phase-noise $e^{-{\rm j} \theta[n]}$ for $0 \leq n < K$ can be expressed using their normalized Discrete Fourier Transform (nDFT) expansions as:
\begin{subequations} \label{eqn_DFT_coeffs}
\begin{eqnarray}
\tilde{\mathbf{w}}[n] &=& \sum_{k \in \mathcal{K}} \mathbf{W}[k] e^{{\rm j} 2 \pi k n/K}, \\
e^{-{\rm j} \theta[n]} &=& \sum_{k \in \mathcal{K}} \Omega[k] e^{{\rm j} 2 \pi k n/K},
\end{eqnarray}
\end{subequations}
where $\mathbf{W}[k] = \frac{1}{K} \sum_{n=0}^{K-1} \tilde{\mathbf{w}}[n] e^{-{\rm j} 2 \pi k n/K}$ and $\Omega[k] = \frac{1}{K} \sum_{n=0}^{K-1} e^{-{\rm j} \theta[n]} e^{-{\rm j} 2 \pi k n/K}$ are the corresponding nDFT coefficients. Here nDFT is an unorthodox definition for Discrete Fourier Transform, where the normalization by $K$ is performed while finding $\mathbf{W}[k], \Omega[k]$ instead of in \eqref{eqn_DFT_coeffs}. These nDFT coefficients are periodic with period $K$ and satisfy the following lemma:
%%%%%%%%%%%%%%%%%%%%%%%%%%%%%%%%%%%%%%%%%%%%%%%%%%%%%%%%
\begin{lemma} \label{Lemma_PN_properties}
The nDFT coefficients of $e^{-{\rm j} \theta[n]}$ for $0 \leq n < K$ satisfy:
\begin{subequations} \label{eqn_PN_lemma}
\begin{flalign}
& \sum_{k \in \mathcal{K}} \Omega[k] {\Omega[k+k_1]}^{*} = \delta^{K}_{0,k_1}, & \label{eqn_lemma_1} \\
& \Delta_{k_1,k_2} \triangleq \mathbb{E}\{\Omega[k_1] {\Omega[k_2]}^{*}\} & \nonumber \\ 
& \ \qquad \approx \frac{\delta_{k_1,k_2}^{K}}{K} \bigg[ \frac{1 - e^{-(\frac{\sigma_{\theta}^2 T_{\rm s} - {\rm j} 4 \pi k_1}{4} )}}{e^{\frac{\sigma_{\theta}^2 T_{\rm s} - {\rm j}4\pi k_1}{2K}} - 1} + \frac{1 - e^{-(\frac{\sigma_{\theta}^2 T_{\rm s} + {\rm j}4 \pi k_1}{4} )}}{1 - e^{-\frac{\sigma_{\theta}^2 T_{\rm s} + {\rm j}4 \pi k_1}{2K}}} \bigg], \! & \label{eqn_lemma_3} 
\end{flalign}
for arbitrary integers $k_1,k_2$, where $\delta^{K}_{a,b} = 1$ if $a=b \ ({\rm mod} \ K)$ or $\delta^{K}_{a,b} = 0$ otherwise. 
\end{subequations}
\end{lemma}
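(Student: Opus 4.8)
The plan is to derive both identities directly from $\Omega[k]=\tfrac{1}{K}\sum_{n=0}^{K-1}e^{-{\rm j}\theta[n]}e^{-{\rm j}2\pi kn/K}$, using the orthogonality of the DFT kernel over the $K$ consecutive frequencies in $\mathcal{K}$. For \eqref{eqn_lemma_1}, I would substitute this definition (and its conjugate with shifted index) into $\sum_{k\in\mathcal{K}}\Omega[k]{\Omega[k+k_1]}^{*}$, interchange the finite summations, and use $\sum_{k\in\mathcal{K}}e^{-{\rm j}2\pi k(n-m)/K}=K\,\delta^{K}_{n,m}$ to collapse the double sum over sample indices to a single sum over $n$. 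Since $e^{-{\rm j}\theta[n]}e^{{\rm j}\theta[n]}=1$, what remains is $\tfrac{1}{K}\sum_{n=0}^{K-1}e^{{\rm j}2\pi k_1 n/K}=\delta^{K}_{0,k_1}$. This part is exact --- it is just a Parseval-type identity reflecting $|e^{-{\rm j}\theta[n]}|=1$ --- and needs no approximation.

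For \eqref{eqn_lemma_3}, I would start from $\Delta_{k_1,k_2}=\tfrac{1}{K^2}\sum_{n,m=0}^{K-1}\mathbb{E}\{e^{-{\rm j}(\theta[n]-\theta[m])}\}\,e^{-{\rm j}2\pi(k_1 n-k_2 m)/K}$. By the Wiener model \eqref{eqn_PN_model}, $\theta[n]-\theta[m]$ is zero-mean Gaussian with variance $\sigma^{2}_{\theta}T_{\rm s}|n-m|/K$, so its characteristic function gives $\mathbb{E}\{e^{-{\rm j}(\theta[n]-\theta[m])}\}=\rho^{|n-m|}$ with $\rho\triangleq e^{-\sigma^{2}_{\theta}T_{\rm s}/(2K)}$. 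Writing $n=m+d$ turns the double sum into $\tfrac{1}{K^2}\sum_{m=0}^{K-1}e^{-{\rm j}2\pi(k_1-k_2)m/K}\sum_{d=-m}^{K-1-m}\rho^{|d|}e^{-{\rm j}2\pi k_1 d/K}$.

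The only delicate step --- and the one I expect to be the main obstacle --- is decoupling the inner $d$-sum from $m$. Because $\rho^{|d|}$ is concentrated on $|d|\lesssim K/(\sigma^{2}_{\theta}T_{\rm s})\ll K$, I would replace the $m$-dependent limits by the fixed symmetric window $d\in\{-K/2,\ldots,K/2-1\}$; equivalently, the triangular (Bartlett) weight $K-|d|$ obtained by counting overlaps is replaced by a rectangular window of the same total mass $K^2$, the discarded tails being of higher order in $\sigma^{2}_{\theta}T_{\rm s}/K$ and hence negligible in the low-phase-noise regime in which \eqref{eqn_lemma_3} is asserted. Once the $d$-limits no longer depend on $m$, the $m$-summation separates and contributes $\sum_{m=0}^{K-1}e^{-{\rm j}2\pi(k_1-k_2)m/K}=K\,\delta^{K}_{k_1,k_2}$, which simultaneously kills all off-diagonal terms. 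This leaves $\Delta_{k_1,k_2}\approx\tfrac{\delta^{K}_{k_1,k_2}}{K}\sum_{d=-K/2}^{K/2-1}\rho^{|d|}e^{-{\rm j}2\pi k_1 d/K}$, and splitting the sum into its $d<0$ and $d\ge0$ halves and evaluating the two finite geometric series --- each of length $K/2$, with common ratios $e^{-(\sigma^{2}_{\theta}T_{\rm s}\pm{\rm j}4\pi k_1)/(2K)}$, the length $K/2$ being exactly what produces the $(\sigma^{2}_{\theta}T_{\rm s}\pm{\rm j}4\pi k_1)/4$ exponents in the numerators --- yields the bracketed expression in \eqref{eqn_lemma_3}. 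The geometric-series algebra itself is routine bookkeeping.
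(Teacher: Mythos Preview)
Your proposal is correct and follows essentially the same route as the paper: part~\eqref{eqn_lemma_1} via the DFT/convolution identity (the paper phrases it as the nDFT of $e^{-{\rm j}\theta[n]}e^{{\rm j}\theta[n]}=1$, which is exactly your direct substitution), and part~\eqref{eqn_lemma_3} via the Gaussian characteristic function, the change of variable $d=n-m$, replacement of the $m$-dependent limits by the fixed window $-K/2\le d\le K/2-1$, and two geometric sums. One minor wording issue: your concentration argument $|d|\lesssim K/(\sigma_\theta^2 T_{\rm s})\ll K$ actually requires $\sigma_\theta^2 T_{\rm s}\gg 1$, which is precisely the regime the paper invokes for the approximation --- not a ``low-phase-noise'' regime.
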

%%%%%%%%%%%%%%%%%%%%%%%%%%%%%%%%%%%%%%%%%%%%%%%%%%%%%%%%
\begin{proof}
See Appendix \ref{appdix1}.
\end{proof}
%%%%%%%%%%%%%%%%%%%%%%%%%%%%%%%%%%%%%%%%%%%%%%%%%%%%%%%%
To test the accuracy of the approximation in Lemma \ref{Lemma_PN_properties}, the Monte-Carlo simulations of $\Delta_{k,k}, \Delta_{k,k+1}$ and $\Delta_{k,k+100}$ for a typical phase-noise process ($-93$dBc/Hz at $10$MHz offset) are compared to \eqref{eqn_lemma_3} in Fig.~\ref{Fig_verify_DFT_theorem}. 
\begin{figure}[!htb]
\centering
\includegraphics[width= 0.45\textwidth]{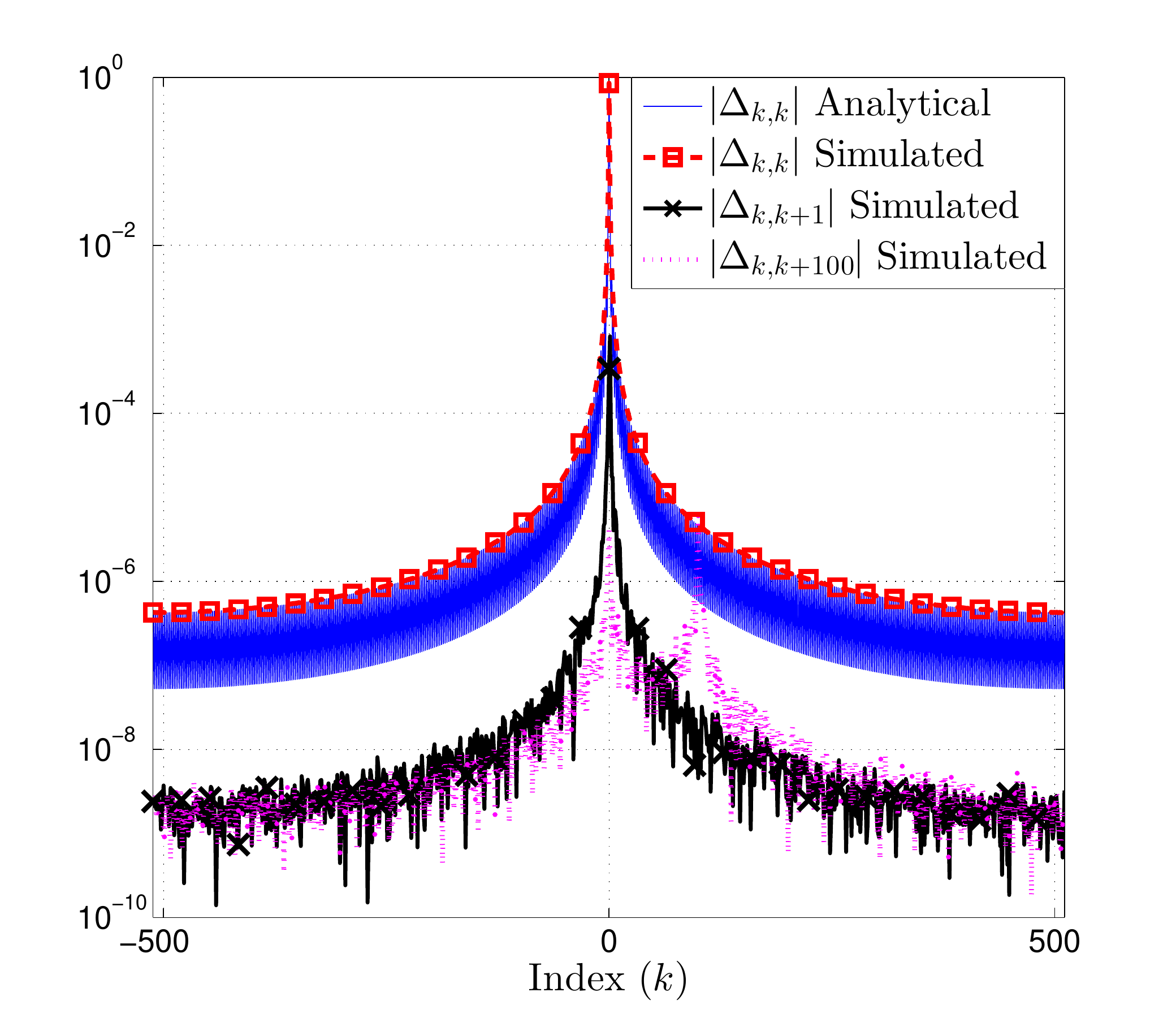}
\caption{Comparison of analytical (from Lemma \ref{Lemma_PN_properties}) and simulated statistics of the nDFT coefficients of a sample RX phase-noise process with $T_{\rm s} =1 \mu$s, $K_1=K_2+1=512$ and $\sigma_{\theta} = 1/\sqrt{T_{\rm s}}$. Simulations averaged over $10^6$ realizations.}
\label{Fig_verify_DFT_theorem}
\end{figure}
As is evident from the results, \eqref{eqn_lemma_3} is accurate for $k_1=k_2$. Similarly, the simulated values of $\Delta_{k,k+1}, \Delta_{k,k+100}$ are $\geq 20$dB lower than $\Delta_{k,k}$ $\forall k$, and can be well approximated as $0$ as in \eqref{eqn_lemma_3}. The analogous version of Lemma \ref{Lemma_PN_properties} for phase-noise modeled as an OU process is presented in Appendix \ref{appdixOU}. 
In a similar way, for the channel noise we have: 
%%%%%%%%%%%%%%%%%%%%%%%%%%%%%%%%%%%%%%%%%%%%%%%%%%%%%%%%
\begin{lemma} \label{Lemma_N_properties}
The nDFT coefficients of $\tilde{\mathbf{w}}[n]$, i.e., $\big \{\mathbf{W}[k] \ \big| \ \forall k \big\}$, are jointly Gaussian with: 
\begin{subequations}
\begin{eqnarray}
\mathbb{E}\{ \mathbf{W}[k_1] {\mathbf{W}[k_2]}^{\dag}\} &=& \delta_{k_1,k_2}^{K} \frac{\mathrm{N}_0}{T_{\rm s}} \fakebold{\mathbb{I}}_{M_{\rm rx}}, \\ 
\mathbb{E}\{ \mathbf{W}[k_1] {\mathbf{W}[k_2]}^{\rm T}\} &=& \fakebold{\mathbb{O}}_{M_{\rm rx}, M_{\rm rx}}, 
\end{eqnarray}
\end{subequations}
for arbitrary integers $k_1,k_2$, where $\delta^{K}_{a,b} = 1$ if $a=b \ ({\rm mod} \ K)$ or $\delta^{K}_{a,b} = 0$ otherwise.
\end{lemma}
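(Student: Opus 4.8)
Joint Gaussianity is immediate from linearity: each nDFT coefficient $\mathbf{W}[k]=\frac{1}{K}\sum_{n=0}^{K-1}\tilde{\mathbf{w}}[n]\,e^{-{\rm j}2\pi kn/K}$ is a fixed linear combination of the samples $\tilde{\mathbf{w}}[0],\dots,\tilde{\mathbf{w}}[K-1]$, which are jointly zero-mean Gaussian as samples of the vector Gaussian process $\tilde{\mathbf{w}}(t)$; hence the whole family $\{\mathbf{W}[k]\}$ is zero-mean and jointly Gaussian. Since $\mathbf{W}[k+K]=\mathbf{W}[k]$, I would only compute the Hermitian covariance and the pseudo-covariance for indices in $\mathcal{K}$ and then read off the periodic ($\delta^{K}_{\cdot,\cdot}$) expressions valid for all integers $k_1,k_2$. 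The plan for these two moments is to first pin down the covariance of the time samples and then propagate it through the (essentially unitary) nDFT.

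For the samples: since the $M_{\rm rx}$ entries of $\tilde{\mathbf{w}}(t)$ are i.i.d., I would write $\mathbb{E}\{\tilde{\mathbf{w}}[n_1]\tilde{\mathbf{w}}[n_2]^{\dag}\}=R_{\rm w}\!\big((n_1-n_2)T_{\rm s}/K\big)\,\fakebold{\mathbb{I}}_{M_{\rm rx}}$, where $R_{\rm w}(\tau)=\int\mathcal{S}_{\rm w}(f)\,e^{{\rm j}2\pi f\tau}\,{\rm d}f$ is the common autocorrelation. The RX band-pass front end makes $\tilde{\mathbf{w}}(t)$ band-limited, so sampling at rate $K/T_{\rm s}$ introduces no aliasing; and because $\mathcal{S}_{\rm w}(f)=\mathrm{N}_0$ is flat over the band of total width $K/T_{\rm s}$ occupied by the OFDM signal (and zero outside it), the governing $\mathrm{sinc}$ vanishes at every nonzero integer sampling lag, i.e. $R_{\rm w}(mT_{\rm s}/K)=\frac{K\mathrm{N}_0}{T_{\rm s}}\,\delta^{K}_{m,0}$ for integer $m$. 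Thus the samples are white, $\mathbb{E}\{\tilde{\mathbf{w}}[n_1]\tilde{\mathbf{w}}[n_2]^{\dag}\}=\frac{K\mathrm{N}_0}{T_{\rm s}}\,\delta^{K}_{n_1,n_2}\,\fakebold{\mathbb{I}}_{M_{\rm rx}}$, while circular symmetry of $\tilde{\mathbf{w}}(t)$, inherited by its samples, gives $\mathbb{E}\{\tilde{\mathbf{w}}[n_1]\tilde{\mathbf{w}}[n_2]^{\rm T}\}=\fakebold{\mathbb{O}}_{M_{\rm rx},M_{\rm rx}}$.

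Substituting into the nDFT sums then closes the argument:
\begin{equation*}
\mathbb{E}\{\mathbf{W}[k_1]\mathbf{W}[k_2]^{\dag}\}=\frac{1}{K^{2}}\sum_{n_1,n_2=0}^{K-1}\mathbb{E}\{\tilde{\mathbf{w}}[n_1]\tilde{\mathbf{w}}[n_2]^{\dag}\}\,e^{-{\rm j}2\pi(k_1n_1-k_2n_2)/K}=\frac{\mathrm{N}_0}{K T_{\rm s}}\,\fakebold{\mathbb{I}}_{M_{\rm rx}}\sum_{n=0}^{K-1}e^{-{\rm j}2\pi(k_1-k_2)n/K},
\end{equation*}
and the geometric sum equals $K\,\delta^{K}_{k_1,k_2}$, giving $\frac{\mathrm{N}_0}{T_{\rm s}}\,\delta^{K}_{k_1,k_2}\,\fakebold{\mathbb{I}}_{M_{\rm rx}}$; the identical manipulation with $(\cdot)^{\rm T}$ in place of $(\cdot)^{\dag}$ returns $\fakebold{\mathbb{O}}_{M_{\rm rx},M_{\rm rx}}$ term by term. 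Equivalently, stacking the $K$ samples, the nDFT acts as $\frac{1}{K}(\mathbf{F}\otimes\fakebold{\mathbb{I}}_{M_{\rm rx}})$ with $\mathbf{F}$ the $K$-point DFT matrix satisfying $\mathbf{F}\mathbf{F}^{\dag}=K\fakebold{\mathbb{I}}_{K}$, so $\frac{1}{\sqrt{K}}\mathbf{F}$ is unitary and carries a white circularly symmetric Gaussian vector to a white circularly symmetric Gaussian vector.

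I expect the only genuinely non-mechanical step to be the whiteness of the samples in the second paragraph: this is exactly where the band-limited, flat-spectrum model of $\mathcal{S}_{\rm w}$ and the choice of sampling rate $K/T_{\rm s}$ are used, and it is worth one sentence to justify that $R_{\rm w}$ vanishes at every nonzero integer sampling lag (the sinc nulls) and that $R_{\rm w}(0)=K\mathrm{N}_0/T_{\rm s}$. Everything past that point is the routine ``DFT of white noise is white'' computation.
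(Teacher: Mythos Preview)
Your proof is correct and follows essentially the same approach as the paper: both establish joint Gaussianity by linearity, invoke circular symmetry for the vanishing pseudo-covariance, and compute the Hermitian covariance by using the band-limited sinc autocorrelation $R_{\tilde{w}}(mT_{\rm s}/K)=\frac{K\mathrm{N}_0}{T_{\rm s}}\delta_{m,0}$ of the noise samples followed by the geometric-sum identity $\sum_{n}e^{-{\rm j}2\pi(k_1-k_2)n/K}=K\delta^{K}_{k_1,k_2}$. Your explicit isolation of the ``samples are white'' step and the unitary-DFT remark are nice clarifications, but the underlying argument is the same as the paper's Appendix~\ref{appdix3}.
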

%%%%%%%%%%%%%%%%%%%%%%%%%%%%%%%%%%%%%%%%%%%%%%%%%%%%%%%%
\begin{proof}
See Appendix \ref{appdix3}.
\end{proof}
%%%%%%%%%%%%%%%%%%%%%%%%%%%%%%%%%%%%%%%%%%%%%%%%%%%%%%%%
Note that using these nDFT coefficients, the low-pass filtered versions of $\tilde{\mathbf{w}}[n]$ and $e^{-{\rm j} \theta[n]}$ in \eqref{eqn_rx_signal_LPF} can be approximated as:
\begin{subequations} \label{eqn_DFT_filt_coeffs}
\begin{eqnarray} 
\hat{\mathbf{w}}[n] &\approx & \sum_{k \in \hat{\mathcal{G}}} \mathbf{W}[k] e^{{\rm j} 2 \pi k n/K} ,\\
A[n] & \approx & \sum_{k \in \hat{\mathcal{G}}} \Omega[k] e^{{\rm j} 2 \pi k n/K} , 
\end{eqnarray}
\end{subequations}
where $\hat{\mathcal{G}} \triangleq \{-\hat{g},...,\hat{g}\}$ and the approximations are obtained by replacing the linear convolution of $\tilde{\mathbf{s}}_{\rm rx, BB}(t)$ and the filter response $\mathrm{LPF}_{\hat{g}}\{\}$ with a circular convolution. This is accurate when the filter response has a narrow support, i.e., for $\hat{g} \gg 1$. 
The remaining results in this paper are based on the approximations in \eqref{eqn_PN_lemma}--\eqref{eqn_DFT_filt_coeffs} and on an additional approximation discussed later in Remark \ref{rem_int_approx}. While we still use the $\leq, =, \geq$ operators in the following results for convenience of notation, it is emphasized that these equations are true in the strict sense only if the aforementioned approximations are met with equality. However simulation results are also used in Section \ref{sec_sim_results} to test the validity of these approximations. Substituting \eqref{eqn_DFT_coeffs} and \eqref{eqn_DFT_filt_coeffs} into \eqref{eqn_sampled_Y}, the $k$-th OFDM demodulation output can be expressed as:
\begin{flalign}
&Y_k = \frac{T_{\rm s}}{K} \sum_{n=0}^{K-1} y[n] e^{-{\rm j} 2 \pi \frac{k n}{K}} & \nonumber \\
&\quad = \sum_{\dot{k} \in \hat{\mathcal{G}}} \mathbf{t}^{\dag} {\boldsymbol{\mathcal{H}}(0)}^{\dag} \bigg( \sum_{\bar{k} \in \mathcal{K}\setminus \mathcal{G}} \boldsymbol{\mathcal{H}}(f_{\bar{k}}) \mathbf{t} \sqrt{E^{(\rm r)}} x_{\bar{k}} \Omega^{*}[\dot{k}] \Omega[\dot{k} + k-\bar{k}] & \nonumber \\
& \qquad \qquad \qquad + \boldsymbol{\mathcal{H}}(0) \mathbf{t} E^{(\rm r)} \Omega^{*}[\dot{k}] \Omega[\dot{k} + k] \bigg) & \nonumber \\
&\qquad + \sum_{\dot{k} \in \hat{\mathcal{G}}} \sqrt{T_{\rm s}} {\mathbf{W}[\dot{k}]}^{\dag} \bigg( \boldsymbol{\mathcal{H}}(0) \mathbf{t} \sqrt{E^{(\rm r)}} \Omega[k+\dot{k}] & \nonumber \\
& \qquad \qquad \qquad + \sum_{\bar{k} \in \mathcal{K}\setminus \mathcal{G}} \boldsymbol{\mathcal{H}}(f_{\bar{k}}) \mathbf{t} x_{\bar{k}} \Omega[k + \dot{k} - \bar{k}] \bigg)  & \nonumber \\
&\qquad + \sum_{\dot{k} \in \hat{\mathcal{G}}} \sqrt{T_{\rm s}} \mathbf{t}^{\dag} \boldsymbol{\mathcal{H}}(0) \mathbf{W}[k + \dot{k}] \sqrt{E^{(\rm r)}} \Omega^{*}[\dot{k}] & \nonumber \\
& \qquad + T_{\rm s} \sum_{\dot{k} \in \hat{\mathcal{G}}} \mathbf{W}^{\dag}[\dot{k}] \mathbf{W}[k + \dot{k}]. & \label{eqn_r_k} 
\end{flalign}
We shall split $Y_k$ as $Y_k = S_k + I_k + Z_k$ where $S_k$, referred to as the signal component, involves the terms in \eqref{eqn_r_k} containing $x_k$ and not containing the channel noise, $I_k$, referred to as the interference component, involves the terms containing $E^{(\rm r)}, \{x_{\bar{k}} \mid \bar{k} \in \mathcal{K} \setminus \{k\}\}$ and not containing the channel noise, and $Z_k$, referred to as the noise component, containing the remaining terms. 
These signal, interference and noise components are analyzed in the following subsections. 

\subsection{Signal Component Analysis} \label{subsec_signal_anal}
From \eqref{eqn_r_k}, the signal component for $k \in \mathcal{K}\setminus \mathcal{G}$ can be expressed as:
\begin{eqnarray} \label{eqn_S_k}
S_k &=& M_{\rm rx} \beta_{0,k} \sqrt{E^{(\rm r)}} x_{k} \Big[ \sum_{\dot{k} \in \hat{\mathcal{G}}} {|\Omega[\dot{k}]|}^2 \Big],
\end{eqnarray}
where we define $\beta_{k_1,k_2} \triangleq \mathbf{t}^{\dag} {\boldsymbol{\mathcal{H}}(k_1)}^{\dag} \boldsymbol{\mathcal{H}}(k_2) \mathbf{t} \big/ M_{\rm rx}$. 
Since $\hat{\mathcal{G}} \subset \mathcal{K}$, note that $\sum_{\dot{k} \in \hat{\mathcal{G}}} {|\Omega[\dot{k}]|}^2 < 1$ from Lemma \ref{Lemma_PN_properties}, i.e., the phase-noise causes some loss in power of the signal component. However this loss is much smaller than in PACE \cite{Ratnam_PACE} or digital CE based beamforming, where only ${|\Omega[0]|}$ contributes to the signal component unless additional phase-noise compensation is used. 
As is evident from \eqref{eqn_S_k}, CACE based beamforming utilizes the (filtered) received signal vector corresponding to the reference tone as weights to combine the received signal vector corresponding to the data sub-carriers, i.e., it emulates imperfect maximal ratio combining. 
The imperfection is because the reference tone and the $k$-th data stream pass through slightly different channels owing to the difference in their modulating frequencies. 
However the resulting loss in beamforming gain is small for sparse channels, i.e., for $L \ll M_{\rm tx}, M_{\rm rx}$, a criterion usually satisfied for mm-wave massive MIMO channels. This is due to a common channel spatial signature across frequency, as shall be shown in Sections \ref{sec_perf_anal} and \ref{sec_sim_results}. The second moment of the signal component, averaged over the phase-noise and data symbols, can be computed as:
\begin{eqnarray}
\mathbb{E}\{{|S_k|}^2\} &=& M_{\rm rx}^2 {|\beta_{0,k}|}^2 E^{(\rm r)} E^{(\rm d)} \mathbb{E} \bigg\{ {\Big[ \sum_{\dot{k} \in \hat{\mathcal{G}}}{|\Omega[\dot{k}]|}^{2} \Big]}^2 \bigg\} \nonumber \\
& \geq &  M_{\rm rx}^2 {|\beta_{0,k}|}^2 E^{(\rm r)} E^{(\rm d)} {\mu(0,\hat{g})}^2, \label{eqn_S_pow}
\end{eqnarray}
where we define $\mu(a,\hat{g}) \triangleq \sum_{\dot{k} \in \hat{\mathcal{G}}} \Delta_{a+\dot{k},a+\dot{k}}$, and \eqref{eqn_S_pow} follows from Jensen's inequality and \eqref{eqn_lemma_3}.

\subsection{Interference Component Analysis} \label{subsec_interf_anal}
From \eqref{eqn_r_k}, the interference component for $k \in \mathcal{K} \setminus \mathcal{G}$ can be expressed as:
\begin{eqnarray}
I_k &=&  \sum_{\bar{k} \in \mathcal{K} \setminus [\mathcal{G} \cup \{k\} ]} \sum_{\dot{k} \in \hat{\mathcal{G}}} M_{\rm rx} \beta_{0,\bar{k}} \sqrt{E^{(\rm r)}} x_{\bar{k}} {\Omega[\dot{k}]}^{*} \Omega[\dot{k}+k-\bar{k}] \nonumber \\
&& + \sum_{\dot{k} \in \hat{G}} M_{\rm rx} \beta_{0,0} E^{(\rm r)} {\Omega[\dot{k}]}^{*} \Omega[\dot{k}+k]. \label{eqn_I_k}
\end{eqnarray}
As is clear from above, the demodulation output $Y_k$ suffers ICI from other sub-carrier data streams and reference tone due to the RX phase-noise. 
The first and second moments of $I_k$, averaged over the other sub-carrier data $\{\bar{k} \in \mathcal{K}\setminus[\mathcal{G}\cup\{k\}]\}$ and the phase-noise, can be expressed as:
\begin{subequations}
\begin{flalign}
&\mathbb{E}\{I_k\} & \nonumber \\
& \stackrel{(1)}{=} \sum_{\dot{k} \in \hat{G}} M_{\rm rx} \beta_{0,0} E^{(\rm r)} \Delta_{\dot{k}+k,\dot{k}} = 0 & \label{eqn_I_mean} \\
& \mathbb{E}\{{|I_k|}^2\} & \nonumber \\ 
&\stackrel{(2)}{=} \!\!\sum_{\bar{k} \in \mathcal{K} \setminus [\mathcal{G} \cup \{k\} ]} \!\! M_{\rm rx}^2 {|\beta_{0,\bar{k}}|}^2 E^{(\rm r)} E^{(\rm d)} \mathbb{E} \bigg\{ {\bigg| \sum_{\dot{k} \in \hat{\mathcal{G}}} {\Omega[\dot{k}]}^{*} \Omega[\dot{k}+k-\bar{k}] \bigg|}^2 \bigg\} & \nonumber \\
& \quad + M_{\rm rx}^2 {|\beta_{0,0}|}^2 {[E^{(\rm r)}]}^2 \mathbb{E} \bigg\{ {\bigg| \sum_{\dot{k} \in \hat{\mathcal{G}}} {\Omega[\dot{k}]}^{*} \Omega[\dot{k}+k] \bigg|}^2 \bigg\} & \nonumber \\
& \stackrel{(3)}{\leq} \sum_{\bar{k} \in \mathcal{K}\setminus \{k\}} M_{\rm rx}^2 {|\beta_{\rm max}|}^2 E^{(\rm r)} E^{(\rm d)} \mathbb{E} \bigg\{ {\bigg| \sum_{\dot{k} \in \hat{\mathcal{G}}} {\Omega[\dot{k}]}^{*} \Omega[\dot{k}+k-\bar{k}] \bigg|}^2 \bigg\} & \nonumber \\
& \quad + M_{\rm rx}^2 {|\beta_{\rm max}|}^2 {[E^{(\rm r)}]}^2 \mathbb{E} \bigg\{ \Big[ \sum_{\dot{k} \in \hat{\mathcal{G}}} {\big|\Omega[\dot{k}]\big|}^2 \Big] \Big[ \sum_{\dot{k} \in \hat{\mathcal{G}}} {\big|\Omega[\dot{k}+k]\big|}^2 \Big] \bigg\} & \nonumber \\
&\stackrel{(4)}{\leq} M_{\rm rx}^2 {|\beta_{\rm max}|}^2 E^{(\rm r)} E^{(\rm d)} \mathbb{E} \bigg\{ \sum_{\dot{k},\ddot{k} \in \hat{\mathcal{G}}} {\Omega[\dot{k}]}^{*} & \nonumber \\
& \quad \times \bigg[ \sum_{\bar{k} \in \mathcal{K} \setminus \{k\}} \Omega[\dot{k}+k-\bar{k}] {\Omega[\ddot{k}+k-\bar{k}]}^{*} \bigg] \Omega[\ddot{k}] \bigg\} & \nonumber \\
& \quad + M_{\rm rx}^2 \sum_{\dot{k} \in \hat{\mathcal{G}}}  {|\beta_{\rm max}|}^2 {[E^{(\rm r)}]}^2 \mathbb{E} \big\{ {\big| \Omega[\dot{k}+k] \big|}^2 \big\} & \nonumber \\
&\stackrel{(5)}{=} M_{\rm rx}^2 {|\beta_{\rm max}|}^2 E^{(\rm r)} \! \Bigg[ E^{(\rm d)} \mathbb{E} \bigg\{ \!\sum_{\dot{k},\ddot{k} \in \hat{\mathcal{G}}} {\Omega[\dot{k}]}^{*}{\Omega[\ddot{k}]} \!-\! {\Big[ \sum_{\dot{k} \in \hat{\mathcal{G}}} {|\Omega[\dot{k}]|}^{2} \Big]}^2 \bigg\} & \nonumber \\
& \quad + \sum_{\dot{k} \in \hat{\mathcal{G}}} \! E^{(\rm r)} \mathbb{E}{\big| \Omega[\dot{k}+k] \big|}^2 \Bigg] & \nonumber \\
& \stackrel{(6)}{\leq} M_{\rm rx}^2 {|\beta_{\rm max}|}^2 E^{(\rm r)} \Big[ E^{(\rm d)} \mu(0,\hat{g}) [1 - \mu(0,\hat{g})] + E^{(\rm r)} \mu(k,\hat{g}) \Big], & \label{eqn_I_pow_1} 
\end{flalign}
\end{subequations}
where $\stackrel{(1)}{=}, \stackrel{(2)}{=}$ are obtained using the fact that $\{x_k | k \in \mathcal{K}\}$ have a zero-mean and are independently distributed; $\stackrel{(3)}{\leq}$ is obtained by defining $\beta_{\rm max} \triangleq \max_{k \in \mathcal{K}} |\beta_{k,k}|$, observing $|\beta_{0,k}| \leq \beta_{\rm max}$, and using $\mathcal{K}\setminus [\mathcal{G} \cup \{k\}] \subseteq \mathcal{K}\setminus \{k\}$ in first term and using the Cauchy-Schwarz inequality for the second term; $\stackrel{(4)}{\leq}$ follows by changing the summation order in the first term and by using \eqref{eqn_lemma_1} for the second term; $\stackrel{(5)}{=}$ follows by using $\Omega[k] = \Omega[k+K]$ and \eqref{eqn_lemma_1} for the first term and $\stackrel{(6)}{\leq}$ follows by using \eqref{eqn_lemma_3} and the Jensen's inequality. 
As shall be shown in Section \ref{sec_sim_results}, \eqref{eqn_I_pow_1} may be a loose bound on ICI for lower subcarriers, i.e., $|k| \ll K$. 
\begin{remark} \label{rem_int_approx}
A tighter approximation for $\mathbb{E}\{{|I_k|}^2\}$ can be obtained by replacing $\mu(k,\hat{g})$ in \eqref{eqn_I_pow_1} with $\tilde{\mu}(k,\hat{g}) \triangleq \sum_{\dot{k} \in \hat{\mathcal{G}}} \Delta_{\dot{k},\dot{k}} \Delta_{\dot{k}+k,\dot{k}+k}$.
\end{remark}
This heurtistic is obtained by assuming $\Omega[\dot{k}]$ and $\Omega[\ddot{k}+k]$ to be independently distributed for $\dot{k},\ddot{k} \in \hat{\mathcal{G}}$ and $k \in \mathcal{K} \setminus \mathcal{G}$ in step $\stackrel{(2)}{=}$ of \eqref{eqn_I_pow_1}, but we skip the proof for brevity. As shall be verified in Section \ref{sec_sim_results}, Remark \ref{rem_int_approx} offers a much better ICI approximation $\forall k$ and hence we shall use $\tilde{\mu}(k,\hat{g})$ instead of $\mu(k,\hat{g})$ in the forthcoming derivations in Section \ref{sec_sim_results}. 
%%%%%%%%%%%%%%%%%%%%%%%%%%%%%%

\subsection{Noise Component Analysis} \label{subsec_noise_anal}
From \eqref{eqn_r_k}, the noise component of $Y_k$ for $k \in \mathcal{K} \setminus \mathcal{G}$ can be expressed as:
\begin{eqnarray}
Z_k &=& Z^{(1)}_k + Z^{(2)}_k + Z^{(3)}_k , \text{where:} \label{eqn_Z_k} \\
Z^{(1)}_k &=& \sum_{\dot{k} \in \hat{\mathcal{G}}} \sqrt{T_{\rm s}} {\mathbf{W}[\dot{k}]}^{\dag} \bigg( \boldsymbol{\mathcal{H}}(0) \mathbf{t} \sqrt{E^{(\rm r)}} \Omega[k+\dot{k}] \nonumber \\
&& + \sum_{\bar{k} \in \mathcal{K}\setminus \mathcal{G}} \boldsymbol{\mathcal{H}}(f_{\bar{k}}) \mathbf{t} x_{\bar{k}} \Omega[k + \dot{k} - \bar{k}] \bigg) \nonumber \\
Z^{(2)}_k &=& \sum_{\ell=0}^{L-1} \sum_{\dot{k} \in \hat{\mathcal{G}}} \sqrt{T_{\rm s}} \mathbf{t}^{\dag} {\boldsymbol{\mathcal{H}}(0)}^{\dag} \mathbf{W}[k+\dot{k}] \sqrt{E^{(\rm r)}} \Omega^{*}[\dot{k}] \nonumber \\
Z^{(3)}_k &=& T_{\rm s} \sum_{\dot{k} \in \hat{\mathcal{G}}} \mathbf{W}^{\dag}[\dot{k}] \mathbf{W}[k + \dot{k}]. \nonumber
\end{eqnarray}
Note that the noise consists of both signal-noise and noise-noise cross product terms. From Lemma \ref{Lemma_N_properties}, it can readily be verified that $\mathbb{E}\{Z_k\} = 0$ and $\mathbb{E}\{Z^{(i)}_k {[Z^{(j)}_k]}^{*}\} = 0$ for $i \neq j$, where the expectation is taken over the noise realizations. Thus the second moment of $Z_k$, averaged over the TX data, phase-noise and channel noise, can be expressed as $\mathbb{E}\{{|Z_k|}^2\} = \mathbb{E}\{{|Z_k^{(1)}|}^2\} + \mathbb{E}\{{|Z_k^{(2)}|}^2\} + \mathbb{E}\{{|Z_k^{(3)}|}^2\}$, where:
\begin{subequations} \label{eqn_noise_var_terms}
\begin{eqnarray}
\mathbb{E}\{{|Z_k^{(1)}|}^2\} & \stackrel{(1)}{=} & \sum_{\dot{k} \in \hat{\mathcal{G}}} \mathrm{N}_0 \mathbb{E}\bigg| \boldsymbol{\mathcal{H}}(0) \mathbf{t} \sqrt{E^{(\rm r)}} \Omega[\dot{k}+k] \nonumber \\
&& + \sum_{\bar{k} \in \mathcal{K} \setminus \mathcal{G}} \boldsymbol{\mathcal{H}}(\bar{k}) \mathbf{t} x_{\bar{k}} \Omega[k+\dot{k}-\bar{k}] {\bigg|}^2 \nonumber \\
& \stackrel{(2)}{=} & \sum_{\dot{k} \in \hat{\mathcal{G}}} M_{\rm rx} \mathrm{N}_0 \bigg[ \beta_{0,0} E^{(\rm r)} \Delta_{\dot{k}+k, \dot{k}+k} \nonumber \\
&& + \sum_{\bar{k} \in \mathcal{K} \setminus \mathcal{G}} \beta_{\bar{k}, \bar{k}} E^{(\rm d)}_{\bar{k}} \Delta_{k+\dot{k}-\bar{k},k+\dot{k}-\bar{k}} \bigg] \\
\mathbb{E}\{ {|Z_k^{(2)}|}^2\} & \stackrel{(3)}{=} & \sum_{\dot{k} \in \hat{\mathcal{G}}} \mathrm{N}_0 {|\boldsymbol{\mathcal{H}}(0) \mathbf{t} |}^2 E^{(\rm r)} \mathbb{E}\{ {|\Omega[\dot{k}]|}^2 \} \nonumber \\
&\stackrel{(4)}{=} & \sum_{\dot{k} \in \hat{\mathcal{G}}} M_{\rm rx} \beta_{0,0} \mathrm{N}_0 E^{(\rm r)} \Delta_{\dot{k}, \dot{k}} \\
\mathbb{E}\{{| Z_k^{(3)}|}^2\} &=& \sum_{\dot{k}, \ddot{k} \in \hat{\mathcal{G}}} T_{\rm s}^2 \mathbb{E} \Big\{ {\mathbf{W}[\dot{k}]}^{\dag} \mathbf{W}[k+\dot{k}] {\mathbf{W}[k+\ddot{k}]}^{\dag} \mathbf{W}[\ddot{k}] \Big\} \nonumber \\
& \stackrel{(5)}{=} & M_{\rm rx} |\hat{\mathcal{G}}| \mathrm{N}_0^2,
\end{eqnarray}
\end{subequations}
where $|\hat{\mathcal{G}}|=2\hat{g}+1$, $\stackrel{(1)}{=}, \stackrel{(3)}{=}$ follow from Lemma \ref{Lemma_N_properties}; $\stackrel{(2)}{=}, \stackrel{(4)}{=}$ follow from \eqref{eqn_lemma_3}, and $\stackrel{(5)}{=}$ follows from Lemma \ref{Lemma_N_properties}, \eqref{eqn_lemma_3} and the result on the expectation of the product of four Gaussian random variables \cite{Bar1971}. From \eqref{eqn_noise_var_terms}, we can then upper-bound the noise power as: 
\begin{eqnarray} \label{eqn_Z_pow}
\mathbb{E}\{{|Z_k|}^2\} \leq M_{\rm rx} \beta_{\rm max} \mathrm{N}_0 \big[ E^{(\rm r)} + |\hat{\mathcal{G}}| E^{(\rm d)} \big] + M_{\rm rx} |\hat{\mathcal{G}}| \mathrm{N}_0^2,
\end{eqnarray}
where we use the fact that $|\beta_{\dot{k},\dot{k}}| \leq \beta_{\rm max}$, $\sum_{\dot{k} \in \hat{\mathcal{G}}} \big[ \Delta_{\dot{k}+k, \dot{k}+k} + \Delta_{\dot{k}, \dot{k}} \big] \leq 1$ for $k \in \mathcal{K} \setminus \mathcal{G}$ (as $\hat{g} \leq g/2$) and $\sum_{\bar{k} \in \mathcal{K} \setminus \mathcal{G}} \Delta_{k+\dot{k}-\bar{k},k+\dot{k}-\bar{k}} \leq 1$, from \eqref{eqn_lemma_1}. 

\section{Performance Analysis} \label{sec_perf_anal}
From \eqref{eqn_r_k}--\eqref{eqn_Z_k}, the \emph{effective} single-input-single-output (SISO) channel between the $k$-th sub-carrier input and corresponding output can be expressed for $k \in \mathcal{K} \setminus \mathcal{G}$ as:
\begin{eqnarray} \label{eqn_effective_chan}
Y_k = M_{\rm rx} \beta_{0,k} \sqrt{E^{(\rm r)}} \Big[ \sum_{\dot{k} \in \hat{\mathcal{G}}} {|\Omega[\dot{k}]|}^2 \Big] x_{k} + I_k + Z_k, 
\end{eqnarray}
where $I_k$ and $Z_k$ are analyzed in Sections \ref{subsec_interf_anal} and \ref{subsec_noise_anal}, respectively. 
As is evident from \eqref{eqn_effective_chan}, the signal component suffers from two kinds of fading: (i) a frequency-selective and channel dependent slow fading component represented by $\beta_{0,k}$ and (ii) a frequency-flat and phase-noise dependent fast fading component, represented by $\sum_{\dot{k} \in \hat{\mathcal{G}}} {|\Omega[\dot{k}]|}^2$. The estimation of these fading coefficients is discussed later in this section. In this paper, we consider the simple demodulation approach where $x_k$ is estimated only from $Y_k$, and the $I_k, Z_k$ are treated as noise.\footnote{The estimation of $x_k$ from multiple OFDM sub-carriers outputs shall be explored in future work.} For this demodulation approach, a lower bound to the signal-to-interference-plus-noise ratio (SINR) can be obtained from \eqref{eqn_S_pow}, \eqref{eqn_I_pow_1}, Remark \ref{rem_int_approx} and \eqref{eqn_Z_pow}, as shown in \eqref{eqn_SINR} at the top of next page,
\begin{figure*}[h]
\begin{eqnarray}
\gamma^{\rm LB}_k(\boldsymbol{\beta}) \triangleq \frac{ M_{\rm rx} {|\beta_{0,k}|}^2 E^{(\rm r)} E^{(\rm d)} {\mu(0,\hat{g})}^2 }{ M_{\rm rx} {|\beta_{\rm max}|}^2 E^{(\rm r)} \Big[ E^{(\rm d)} \mu(0,\hat{g}) \big(1 \!-\! \mu(0,\hat{g})\big) \!+\! E^{(\rm r)} \tilde{\mu}(k,\hat{g}) \Big] \!+\! \beta_{\rm max} \mathrm{N}_0 \Big[ E^{(\rm r)} \!+\! |\hat{\mathcal{G}}| E^{(\rm d)} \Big] \!+\! |\hat{\mathcal{G}}| \mathrm{N}_0^2 }, \label{eqn_SINR}
\end{eqnarray}
\end{figure*}
where $\boldsymbol{\beta} \triangleq \{ \beta_{0,k} | \mathcal{K}\}$ and we use the fact that $\mathbb{E}\{{|I_k + Z_k|}^2\} = \mathbb{E}\{{|I_k|}^2\}+\mathbb{E}\{{|Z_k|}^2\}$.\footnote{Since $Z_k$ is the noise experienced while estimating $x_k$, it is inaccurate to take an expectation of ${|Z_k|}^2$ with respect to $x_k$, as in \eqref{eqn_noise_var_terms}. However the impact of this error is negligible when $K \gg 1$.} 
\begin{remark} \label{rem_array_orth}
If the RX array response vectors for the MPCs are mutually orthogonal i.e. ${\mathbf{a}_{\rm rx}(\ell_1)}^{\dag} \mathbf{a}_{\rm rx}(\ell_2) = M_{\rm rx} \delta_{\ell_1,\ell_2}^{\infty}$, then $\beta_{\dot{k}, \ddot{k}} = \sum_{\ell=0}^{L-1} {|\alpha_{\ell}|}^2 {|{\mathbf{a}_{\rm tx}(\ell) }^{\dag} \mathbf{t}|}^2 e^{{\rm j} 2 \pi (f_{\dot{k}}-f_{\ddot{k}}) \tau_{\ell}}$ and $\beta_{\rm max} = \bar{\beta}$, where we define $\bar{\beta} \triangleq \sum_{\ell=0}^{L-1} {|\alpha_{\ell}|}^2 \allowbreak {|{\mathbf{a}_{\rm tx}(\ell) }^{\dag} \mathbf{t}|}^2$. 
\end{remark}
The orthogonality of array response vectors is approximately satisfied if the MPCs are well separated and $M_{\rm rx} \gg L$ \cite{Ayach2012}. Additionally, while terms in $\beta_{\dot{k},\ddot{k}}$ combine incoherently, the resulting loss in $\gamma^{\rm LB}_k(\boldsymbol{\beta})$ is small for sparse wide-band channels with small $L$. Thus the CACE technique is very well suited for mm-wave massive MIMO channels where these conditions are typically satisfied. 
From Remark \ref{rem_array_orth}, note that even without explicit CE at the RX $\gamma^{\rm LB}_k(\boldsymbol{\beta})$ scales with $M_{\rm rx}$ in the low SNR regime, which is a desired characteristic. 
Though the ICI term also scales with $M_{\rm rx}$, its impact can be kept small in the desired SNR range by picking $\hat{g}$ such that $\mu(0,\hat{g}) \approx 1$. 
In a similar way, with perfect knowledge of the fading coefficients at the RX, an approximate lower bound to the ergodic capacity can be obtained as:\footnote{Here the ergodic capacity is computed assuming $\{\beta_{0,k} | k \in \mathcal{K}\}$ remain constant for infinite time but $\sum_{\dot{k} \in \hat{\mathcal{G}}} {|\Omega[\dot{k}]|}^2$ experiences many independent realizations. This capacity is representative of the throughput of practical codes that have a length spanning multiple OFDM symbols but smaller than coherence time of $\beta_{0,k}$ \cite{Foschini1998}.}
\begin{eqnarray}
C(\boldsymbol{\beta}) &\stackrel{(1)}{\geq} & \frac{1}{K}\sum_{k \in \mathcal{K} \setminus \mathcal{G}} \mathbb{E}_{\sum_{\dot{k} \in \hat{\mathcal{G}}} {|\Omega[\dot{k}]|}^2} \bigg\{ \log \bigg[ 1 + \nonumber \\
&& \qquad \frac{M_{\rm rx}^2 {|\beta_{0,k}|}^2 E^{(\rm r)} E^{(\rm d)} {\big[\sum_{\dot{k} \in \hat{\mathcal{G}}} {|\Omega[\dot{k}]|}^2 \big]}^2}{ \mathbb{E} \big\{ {|I_k|}^2 + {|Z_k|}^2 \Big| \sum_{\dot{k} \in \hat{\mathcal{G}}} {|\Omega[\dot{k}]|}^2 \big\} } \bigg] \bigg\} \nonumber \\
&\stackrel{(2)}{\approx} & \frac{1}{K}\sum_{k \in \mathcal{K} \setminus \mathcal{G}} \bigg( \log \big[ \mathbb{E} \{ {|I_k|}^2 \!\!+ {|Z_k|}^2\} \nonumber \\
&& + M_{\rm rx}^2 {|\beta_{0,k}|}^2 E^{(\rm r)} E^{(\rm d)} {\mu(\hat{g})}^2 \big] - \log[ \mathbb{E} \{ {|I_k|}^2 \!\!+ {|Z_k|}^2\}] \bigg)\nonumber \\
& \stackrel{(3)}{\geq} & \frac{1}{K}\sum_{k \in \mathcal{K} \setminus \mathcal{G}} \log \big[ 1 + \gamma^{\rm LB}_{k} (\boldsymbol{\beta}) \big] \triangleq C_{\rm approx}(\boldsymbol{\beta}), \label{eqn_erg_cap}
\end{eqnarray}
where $\stackrel{(1)}{\geq}$ is obtained by assuming $I_k, Z_k$ to be Gaussian distributed and using the expression for ergodic capacity \cite{Goldsmith1997}, $\stackrel{(2)}{\approx}$ follows by sending the outer expectation into the $\log(\cdot)$ functions and  $\stackrel{(3)}{\geq}$ follows from \eqref{eqn_S_pow}, \eqref{eqn_I_pow_1} and \eqref{eqn_Z_pow}. While $\stackrel{(2)}{\approx}$ is an approximation, it typically yields a lower bound since ${\rm Variance}\{\sum_{\dot{k} \in \hat{\mathcal{G}}} {|\Omega[\dot{k}]|}^2 \} \leq  \mu(0,\hat{g}) [1 - \mu(0,\hat{g})] \ll \mu(0,\hat{g})^2$ (from \eqref{eqn_lemma_1} and \cite{Bhatia2000}).

Note that for demodulating $x_k$'s and achieving the above SINR and capacity, the RX requires estimates of $\mathrm{N}_0$ and the SISO channel fading coefficients $\boldsymbol{\beta}$ and $\sum_{\dot{k} \in \hat{\mathcal{G}}} {|\Omega[\dot{k}]|}^2$. Since the RX has a good beamforming gain \eqref{eqn_SINR}, the channel parameters $\boldsymbol{\beta}, \mathrm{N}_0$ can be tracked accurately at the RX with a low estimation overhead using pilot symbols and blanked symbols. 
These values, along with phase-noise parameter $\sigma_{\theta}$, can further be fed back to the TX for rate and power allocation. Note that since these pilots are only used to estimate the SISO channel parameters and not the actual MIMO channel, the advantages of simplified CE are still applicable for a CACE based RX. On the other hand, the low variance albeit fast varying component $ \sum_{\dot{k} \in \hat{\mathcal{G}}} {|\Omega[\dot{k}]|}^2$ can be estimated for every symbol using the $0$-th sub-carrier output $Y_0$. It can be shown from \eqref{eqn_r_k} that $Y_0 = M_{\rm rx} \beta_{0,0} E^{(\rm r)} \Big[ \sum_{\dot{k} \in \hat{\mathcal{G}}} {|\Omega[\dot{k}]|}^2 \Big] + I_0 + M_{\rm rx} |\hat{\mathcal{G}}| \mathrm{N}_0 + Z_0$, where we have $\mathbb{E}\{{|I_0|}^2\} \leq \mathbb{E}\{{|I_k|}^2\}$ and $\mathbb{E}\{{|Z_0|}^2\} \leq 2 \mathbb{E}\{{|Z_k|}^2\}$ for any $k \in \mathcal{K}\setminus \mathcal{G}$.\footnote{While the derivations follow similar steps to those in Section \ref{sec_demod_anal}, the explicit proof is skipped for brevity.} Thus $\sum_{\dot{k} \in \hat{\mathcal{G}}} {|\Omega[\dot{k}]|}^2$ can be estimated from $Y_0$ with an SINR $\geq \frac{E^{\rm r} \gamma^{\rm LB}_k(\boldsymbol{\beta}) }{2 E_{(\rm d)}}$, which is usually a large value.

\subsection{Optimizing the system parameters} \label{subsec_parameter_choices}
In this section we find capacity maximizing values of the system parameters $g, E^{(\rm r)}$ and $\hat{g}$. From \eqref{eqn_erg_cap}, note that the approximate ergodic capacity $C_{\rm approx}(\boldsymbol{\beta})$ is a decreasing function of $g$ for $g \geq 2\hat{g}$. Thus a $C_{\rm approx}(\boldsymbol{\beta})$ maximizing choice of $g$ should satisfy $g=2\hat{g}$. To find a near-optimal values of $E^{(\rm r)}$ and $\hat{g}$, we further lower bound $C_{\rm approx}(\boldsymbol{\beta})$ using \eqref{eqn_erg_cap} and \eqref{eqn_SINR}, as: % we can also lower bound $C_{\rm approx}(\boldsymbol{\beta})$ as:
\begin{subequations}
\begin{flalign}
& C_{\rm approx}(\boldsymbol{\beta}) \stackrel{(1)}{\geq} \frac{1}{K}\sum_{k \in \mathcal{K} \setminus \mathcal{G}} \!\! \log({|\beta_{0,k}|}^2) + \frac{K-|\mathcal{G}|}{K}\log[\Xi(\boldsymbol{\beta})], \!\!\!\! & \label{eqn_erg_cap_aprox2}
\end{flalign}
\begin{figure*}
\begin{eqnarray}
\Xi(\boldsymbol{\beta}) & \stackrel{(2)}{=} & \frac{ M_{\rm rx} E^{(\rm r)} E^{(\rm d)} {\mu(0,\hat{g})}^2 }{ M_{\rm rx} {|\beta_{\rm max}|}^2 E^{(\rm r)} \big( \frac{E_{\rm s}}{K-|\mathcal{G}|} \big) \mu(0,\hat{g}) \big[1 \!-\! \mu(0,\hat{g})\big] \!+\! \beta_{\rm max}\mathrm{N}_0 \big( E^{(\rm r)} \!+\! |\hat{\mathcal{G}}| E^{(\rm d)} \big) \!+\! |\hat{\mathcal{G}}| \mathrm{N}_0^2 },  \label{eqn_SINR_LB}
\end{eqnarray}
\end{figure*}
\end{subequations}
\noindent where $\Xi(\boldsymbol{\beta})$ is as given in \eqref{eqn_SINR_LB} at the top of this page, $\stackrel{(1)}{\geq}$ follows from the fact that $\log(1+ \gamma^{\rm LB}_{k} (\boldsymbol{\beta})) \geq \log(\gamma^{\rm LB}_{k} (\boldsymbol{\beta}))$ and by taking the summation over $k$ in \eqref{eqn_erg_cap} into the denominator of the logarithm; and $\stackrel{(2)}{=}$ in \eqref{eqn_SINR_LB} follows from the fact that $\sum_{k \in \mathcal{K} \setminus \mathcal{G}} \tilde{\mu}(k,\hat{g}) \leq \sum_{k \in \mathcal{K} \setminus \hat{\mathcal{G}}} \mu(0, \hat{g}) \Delta_{k,k}$ and $E^{(\rm d)} ({K-|\mathcal{G}|}) + E^{(\rm r)} = E_{\rm s}$. 
It can be verified that the numerator of $\Xi(\boldsymbol{\beta})$ is a differentiable, strictly concave function of $E^{(\rm r)}$, while the denominator is a positive, affine function of $E^{(\rm r)}$. Thus $\Xi(\boldsymbol{\beta})$ is a strictly pseudo-concave function of $E^{(\rm r)}$ \cite{Schaible1983}, and the $C_{\rm approx}(\boldsymbol{\beta})$ maximizing power allocation can be obtained by setting $\frac{ {\rm d} \Xi(\boldsymbol{\beta})}{{\rm d} E^{(\rm r)}} = 0$ as:
\begin{eqnarray} \label{eqn_Er_opt}
E^{(\rm r)}_{\rm opt} = E_{\rm s} \frac{\sqrt{R^2 + QR}-R}{Q}
\end{eqnarray}
where $Q = M_{\rm rx} {|\beta_{\rm max}|}^2 [1-\mu(0,\hat{g})] \mu(0,\hat{g}) E_{\rm s} + \beta_{\rm max} \mathrm{N}_0 (K-|\hat{\mathcal{G}}|-|\mathcal{G}|)$ and $R = \mathrm{N}_0 |\hat{\mathcal{G}}| \big[ \beta_{\rm max}  + \mathrm{N}_0 (K-|\mathcal{G}|) / E_{\rm s} \big]$. 
As evident from \eqref{eqn_SINR_LB}, $\hat{g}$ offers a trade-off between the phase-noise induced ICI and the channel noise accumulation. While finding a closed form expression for \eqref{eqn_erg_cap_aprox2} maximizing $\hat{g}$ is intractable, it can be computed numerically by performing a simple line search over $1 \leq \hat{g} \leq \min\{K_1,K_2\}/2$, with $g = 2 \hat{g}$ and $E^{(\rm r)}$ as given by \eqref{eqn_Er_opt}. 

\section{Initial Access, TX beamforming and uplink beamforming} \label{sec_IA_aCSI_at_BS}
In this section we briefly discuss stages (i) and (ii) of downlink transmission (see Section \ref{sec_chan_model}), and uplink TX beamforming for CACE aided UEs. 
In the suggested IA protocol for stage (i), the BS performs beam sweeping along different angular directions, possibly with different beam widths, similar to the approach of 3GPP New Radio (NR). For each TX beam, the BS transmits primary (PSS) and secondary synchronization sequences (SSS) with the reference signal, in a form similar to \eqref{eqn_tx_signal}. The UEs use CACE aided RX beamforming, and initiate uplink random access to the BS upon successfully detecting a PSS/SSS. As shall be shown in Section \ref{sec_sim_results}, the SINR expression \eqref{eqn_SINR} is resilient to frequency mismatches between TX and RX oscillators, and thus is also applicable for the PSS/SSSs where frequency synchronization may not exist. Since angular beam-sweeping is only performed at the BS, the IA latency does not scale with $M_{\rm rx}$ and yet the PSS/SSS symbols can exploit the RX beamforming gain, thus improving cell discovery radius and/or reducing IA overhead. This is in contrast to digital CE at the UE, which would require sweeping through many RX beam directions for each TX direction, necessitating several repetitions of the PSS/SSS for each TX beam. 
During downlink stage (ii), note that scheduling of UEs, designing TX beamformer $\mathbf{t}$ and allocation of power requires knowledge of $\{|\alpha_{\ell}|, \mathbf{a}_{\rm tx}(\ell)\}$ for all the UEs. 
Such rCSI can be acquired at the BS either by downlink CE with rCSI feedback from the UEs or by uplink CE. The protocol for downlink CE with feedback is similar to the IA protocol, with the BS transmitting pilot symbols instead of PSS/SSS along different candidate $\mathbf{t}$'s. Uplink CE can be performed by transmitting orthogonal pilots from the UEs omni-directionally, and using any of the digital CE algorithms from Section \ref{sec_intro} at the BS. Note that CACE cannot be used at the BS since the pilots from multiple UEs need to be separated via digital processing. 

Note that the phase shifts used for RX beamforming at a CACE aided UE in downlink, can also be used for transmit beamforming in the uplink. However since the reference signal is not available at the UE during uplink transmission in time division duplexing systems, a mechanism for locking these phase shift values from a previous downlink transmission stage is required (similar to \cite{Ratnam_PACE}). In contrast, frequency division duplexing can avoid such a mechanism due to continuous availability of the downlink reference, and consequently $\hat{\mathbf{s}}_{\rm rx,BB}(t)$. 

\section{Simulation Results} \label{sec_sim_results}
For the simulations, we consider a single cell scenario with a $\lambda/2$-spaced $32 \times 8$ ($M_{\rm tx} = 256$) antenna BS and one representative UE: with a $\lambda/2$-spaced $16 \times 4$ ($M_{\rm rx} = 64$) antenna array, one down-conversion chain, using CACE aided beamforming and having perfect timing synchronization to the BS. The BS has apriori rCSI and transmits one spatial OFDM data stream with $T_{\rm s}=1 \mu$s, $K_1=K_2+1=512$ and $f_{\rm c} = 30$ GHz along the strongest MPC, i.e., $\mathbf{t} = \mathbf{a}_{\rm tx}(\bar{l})$ for $\bar{l} = \rm{argmax}_{\ell}\{|\alpha_{\ell}|\}$. 
The UE oscillator experiences phase-noise with variance $\sigma_{\theta}^2=1/\sqrt{T_{\rm s}}$ known both to the BS and UE. The UE also has perfect knowledge of $\boldsymbol{\beta}, \mathrm{N}_0$ and $\sum_{\dot{k} \in \hat{\mathcal{G}}} {|\Omega[\dot{k}]|}^2$. For convenience, we shall use $\bar{\beta} E_{\rm s}/K \mathrm{N}_0$ to quantify the simulation SNR, which reflects the mean SNR at any RX antenna without RX beamforming gain (see Remark \ref{rem_array_orth}). 

For testing the validity of the analytical results, we first consider a sparse channel matrix $\mathbf{H}(t)$ with $L=3$, $\hat{\tau}_{\ell}= \{0, 20, 40\}$ns, angles of arrival $\psi^{\rm rx}_{\rm azi} = \{0, \pi/6, -\pi/6\}$, $\psi^{\rm rx}_{\rm ele} = \{0.45\pi, \pi/2, \pi/2\}$ and normalized amplitudes $\frac{\alpha_{\ell} \mathbf{a}_{\rm tx}(\ell)^{\dag} \mathbf{t}}{\sqrt{\bar{\beta}}} = \{\sqrt{0.6}, -\sqrt{0.3}, \sqrt{0.1}\}$. The UE uses $\hat{g} = g/2 = 10$ and $E^{(\rm r)}, E^{(\rm d)}$ from \eqref{eqn_Er_opt}. For this model, the symbol error rates (SERs) for the sub-carriers, obtained by Monte-Carlo simulations, are compared to the analytical SERs for a Gaussian channel with SINR given by \eqref{eqn_SINR} (with/without Remark \ref{rem_int_approx}) in Fig.~\ref{Fig_SER_compare}. For the Monte-Carlo results, we use truncated sinc filters: ${\rm LPF}_{\hat{g}}(t) = \sin(2 \pi \hat{g} t/T_{\rm s}) \big/ (\pi t)$ for $|t| \leq 2T_{\rm s}/\hat{g}$. 
As observed from the results and mentioned in Section \ref{subsec_interf_anal}, the use of Remark \ref{rem_int_approx} in \eqref{eqn_SINR} provides a tight SINR bound even for small $|k|$.  
We also observe that the SER for $k = 22 \ (\approx \hat{g})$ is high due to the ICI caused from the high power reference signal. However this ICI diminishes very quickly with $k$ due to phase noise suppression, as evident from the SER for $k=-40$. While the mean RX oscillator frequency was assumed to be perfectly matched to the TX oscillator for the analytical results (see Section \ref{sec_chan_model}), we also plot in Fig. \ref{Fig_SER_compare} the case with a $5$ MHz frequency mismatch. Results show a negligible degradation in performance, suggesting that the CACE design is resilient to oscillator frequency mismatches that are smaller than the cut-off frequency of ${\rm LPF}_{\hat{g}}$. For computational tractability, and due to the accuracy of the bounds in Fig.~\ref{Fig_SER_compare}, we shall henceforth use \eqref{eqn_SINR} and \eqref{eqn_erg_cap} to quantify the performance of CACE for the remaining results. 
\begin{figure}[!htb]
\centering
\includegraphics[width= 0.45\textwidth]{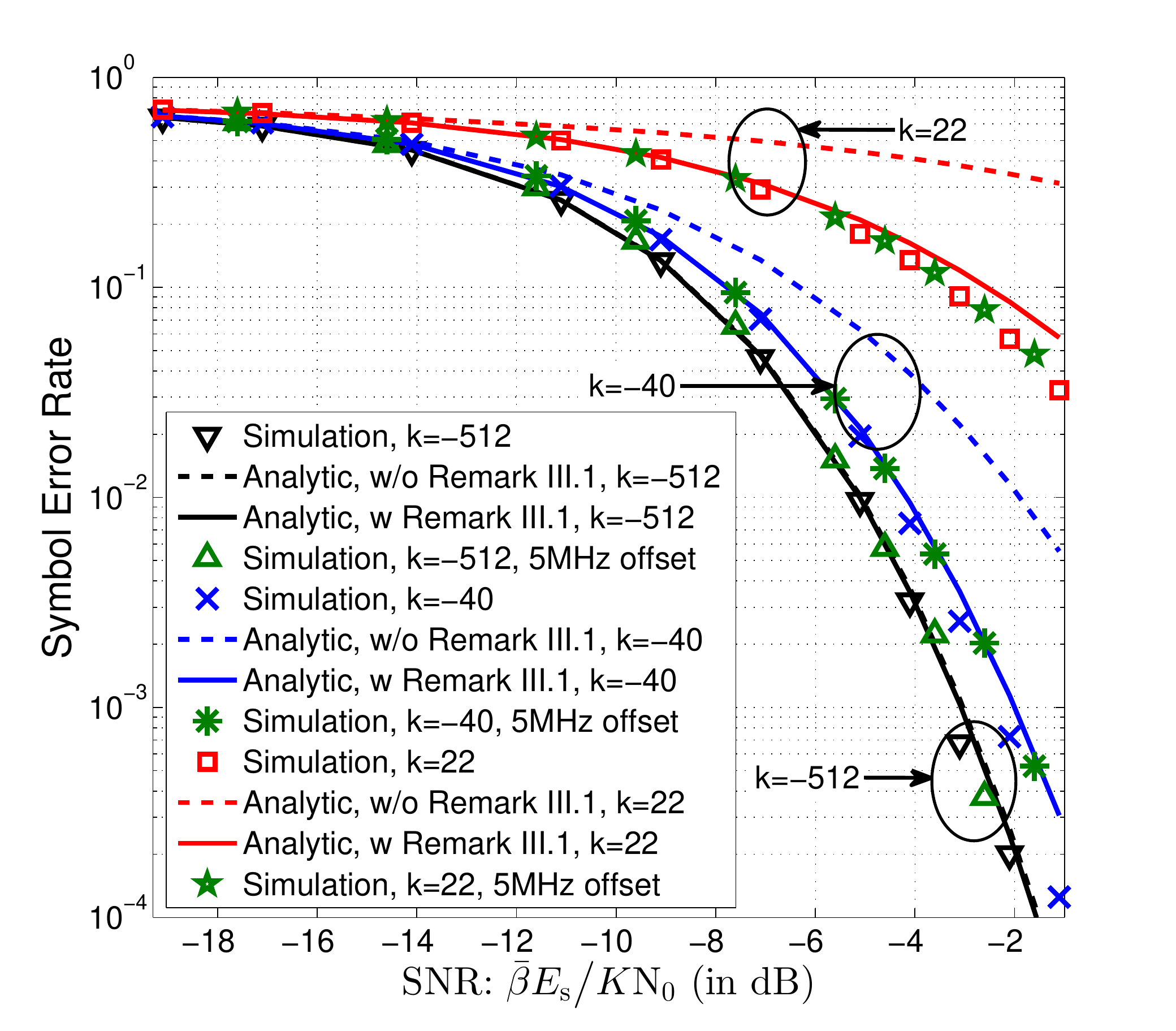}
\caption{Comparison of analytical SERs (from \eqref{eqn_SINR} with/without Remark \ref{rem_int_approx}) to simulated results, for different sub-carriers of a CACE receiver with Quadrature Phase Shift Keying. Simulations consider $\sigma_{\theta}^2=1/T_{\rm s}$ ($-93$dBc at $10$MHz offset), mean RX oscillator frequency of $f_{\rm c}$ and $f_{\rm c}+5$MHz, and are obtained by averaging over $10^6$ realizations.}
\label{Fig_SER_compare}
\end{figure}

We next plot $C_{\rm approx}(\boldsymbol{\beta})$ from \eqref{eqn_erg_cap} as a function of $\hat{g}$ in Fig.~\ref{Fig_pow_alloc_compare}, with (a) $C_{\rm approx}(\boldsymbol{\beta})$ maximizing $E^{(\rm r)}$ (obtained by exhaustive search over $0 \leq E^{(\rm r)} \leq E_{\rm s}$) and (b) $E^{(\rm r)}$ chosen from \eqref{eqn_Er_opt}, respectively. As observed from the results, the curves are very close, suggesting the accuracy of the power allocation in \eqref{eqn_Er_opt}. Fig.~\ref{Fig_pow_alloc_compare} also demonstrates the trade-off characterized by $\hat{g}$: where ICI degrades the performance for small $\hat{g}$ and the noise accumulation, spectral efficiency reduction degrade performance for large $\hat{g}$. 
We also note that the optimal $\hat{g}$ increases with SNR. 
\begin{figure}[!htb]
\centering
\includegraphics[width= 0.45\textwidth]{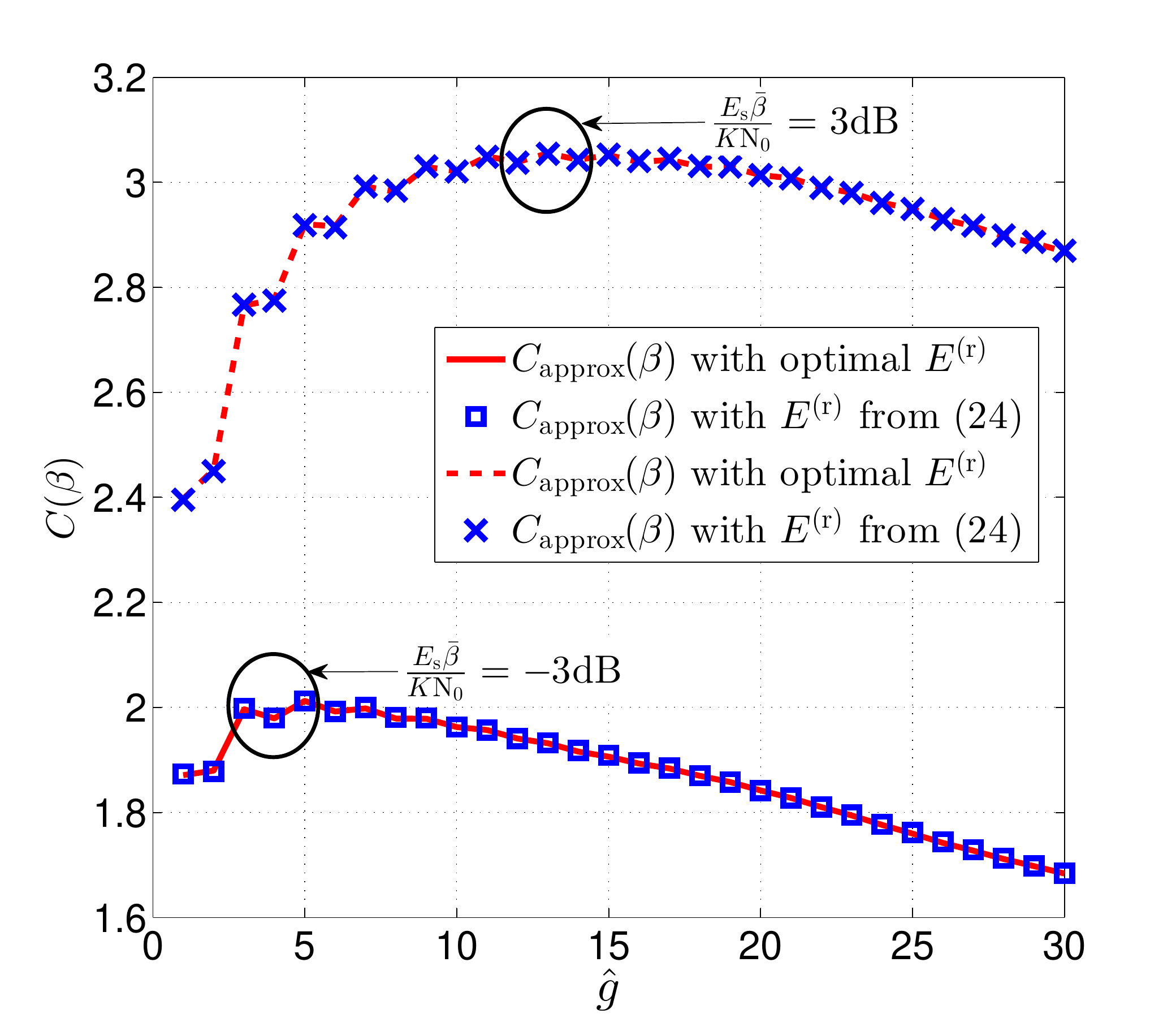}
\caption{Comparison of approximate capacity (from \eqref{eqn_erg_cap}) versus $\hat{g}$, with optimal choice of $E^{(\rm r)}$ and $E^{(\rm r)}$ chosen via \eqref{eqn_Er_opt}, respectively, for $\sigma_{\theta}^2=1/T_{\rm s}$ and $\bar{\beta} E_{\rm s}/K \mathrm{N}_0 = -3, 3$dB.}
\label{Fig_pow_alloc_compare}
\end{figure}

Fig.~\ref{Fig_cap_compare_sparse} compares the achievable throughput (excluding CE overhead) for beamforming with digital CE and different ACE schemes: CACE, PACE \cite{Ratnam_PACE}, MA-FSR \cite{Ratnam_Globecom2018}, respectively for the sparse channel defined above. For digital CE, the RX beamformer is aligned with the largest eigenvector of the effective RX correlation matrix $\mathbf{R}_{\rm rx}(\mathbf{t}) = \frac{1}{K} \sum_{k \in \mathcal{K}}\boldsymbol{\mathcal{H}}(f_k) \mathbf{t} \mathbf{t}^{\dag} {\boldsymbol{\mathcal{H}}(f_k)}^{\dag}$ \cite{Sudarshan2006}, which in turn is either (a) known apriori at the BS or (b) is estimated by nested array based sampling \cite{Pal2010}. To decouple the loss in beamforming gain due to CE errors from loss due to phase-noise, we assume $\sigma_{\theta} \approx 0$. As is evident from Fig.~\ref{Fig_cap_compare_sparse}, PACE and CACE suffer only a $\leq 2$dB beamforming loss in compared to digital CE in sparse channels and above a threshold SNR. While CACE performs marginally worse than PACE at high SNR due to power wastage on a continuous reference, unlike PACE it does not suffer from PLL based carrier recovery losses at low SNR. 
While MA-FSR performs poorly due to low bandwidth efficiency, it requires much simpler hardware then all other schemes. To demonstrate the phase-noise suppressing capability of CACE (and MA-FSR), we also plot the throughput of CACE (with optimal $\hat{g}$) and digital CE, with $\sigma_{\theta}^2 = 1/T_{\rm s}$ and without any additional phase-noise mitigation. As is evident from the results, both CACE and MA-FSR aid in mitigating oscillator phase-noise in addition to enabling RX beamforming. 
To study the impact of more realistic channels and number of MPCs, we also consider a rich scattering stochastic channel in Fig.~\ref{Fig_cap_compare_dense}, having $L/10$ resolvable MPCs and $10$ sub-paths per resolvable MPC. All channel parameters are generated according to the 3GPP TR38.900 Rel 14 channel model (UMi NLoS scenario) \cite{TR38900_chanmodel}, with the resolvable MPCs and sub-paths modeled as clusters and rays, respectively. However to model the sub-paths of each MPC as unresolvable, we use an intra-cluster delay spread of $1 ns$ and an intra-cluster angle spread of $\pi/50$ (for all elevation, azimuth, arrival and departure).
%%%%%%%%%%
As observed, the performance of ACE schemes degrades slightly faster with $L$ than of digital CE due to the incoherent combining of the MPC contributions in $\beta_{0, k}$ (see Remark \ref{rem_array_orth}). %, making them suitable for $L \leq 10$ resolvable MPCs. 
%%%%%%%%%%%%%%%%%
%
\begin{figure}[!htb]
\centering
\subfloat[Sparse channel]{\includegraphics[width= 0.45\textwidth]{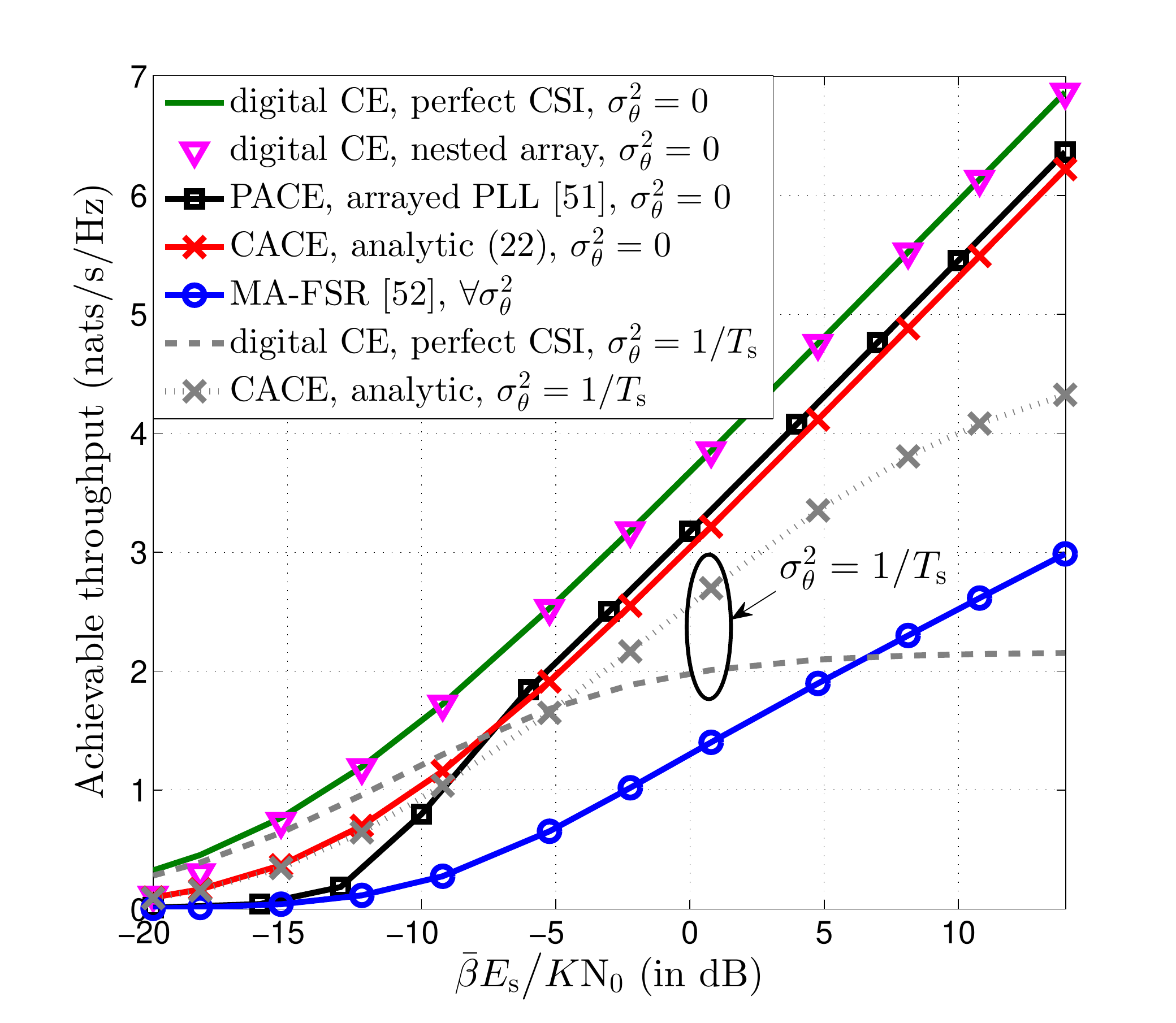} \label{Fig_cap_compare_sparse}} \\
\subfloat[Dense stochastic channel]{\includegraphics[width= 0.45\textwidth]{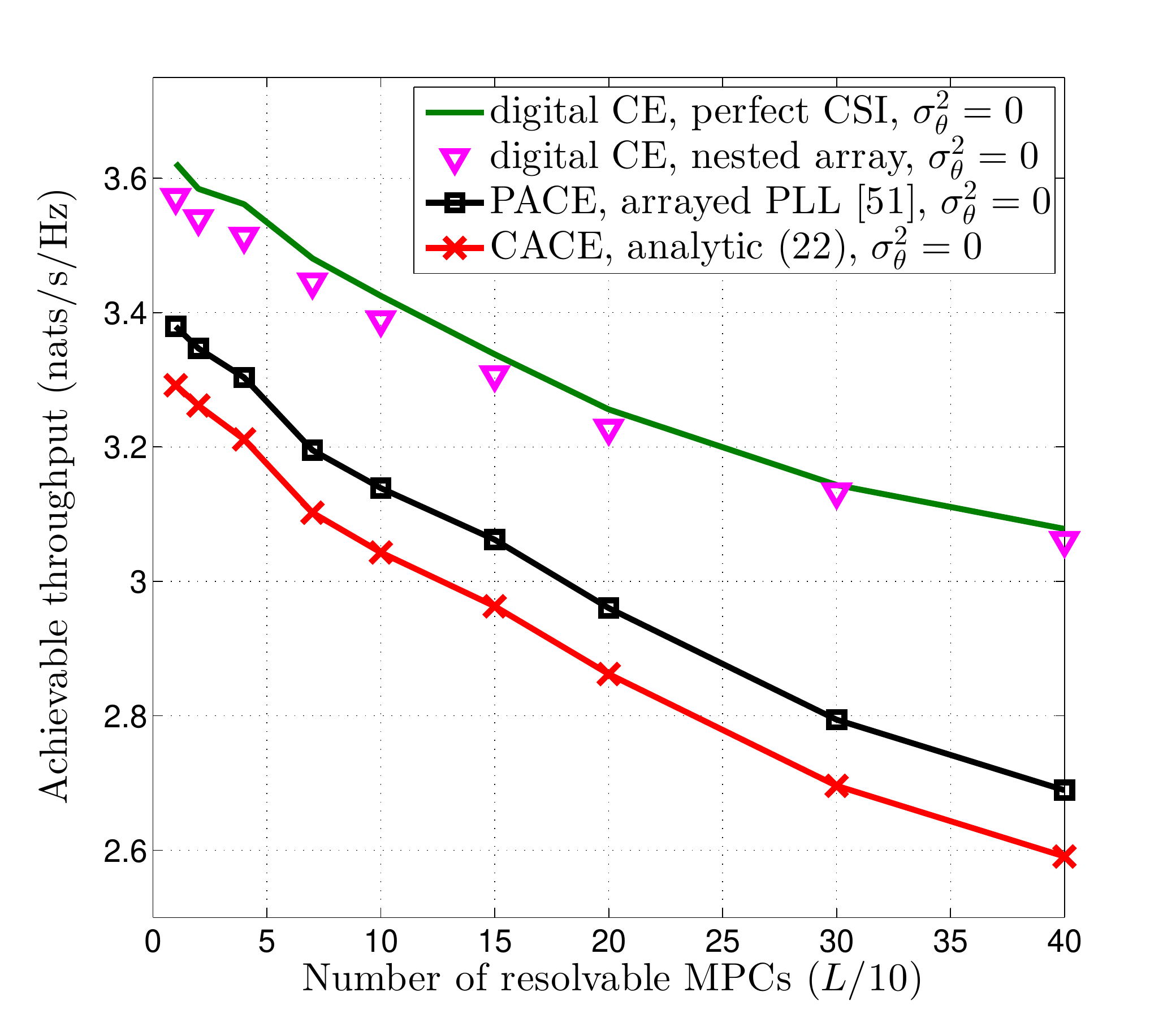} \label{Fig_cap_compare_dense}}
\caption{Throughput of ACE schemes (PACE, CACE, MA-FSR) and of digital CE with either perfect rCSI or nested array sampling, versus SNR and $L$. For PACE, the arrayed PLL from \cite{Ratnam_PACE} is used with identical parameters, and RX beamformer design phase lasts $6$ symbols. For nested array sampling, horizontal($4,4$) and vertical $(2,2)$ nested arrays are used. For Fig.~\ref{Fig_cap_compare_dense}, we use $\bar{\beta} E_{\rm s}/K \mathrm{N}_0 = 1$.}
\label{Fig_cap_compare}
\end{figure}

Note that the results in Fig.~\ref{Fig_cap_compare} do not include the CE overhead for PACE and digital CE. While nested array digital CE requires $21$ dedicated pilot symbols ($\approx 2\sqrt{M_{\rm rx}}$) for updating RX beamformer, PACE requires $6$ symbols (${\rm O}(1)$) and CACE, MA-FSR only require a continuous reference tone. The corresponding overhead reduction can be significant when downlink CE with rCSI feedback is used for rCSI acquisition at the BS (see Section \ref{sec_IA_aCSI_at_BS}). For example, with exhaustive beam-scanning \cite{Jeong2015} at the BS and an rCSI coherence time of $10$ms, the BS rCSI acquisition overhead reduces from $40$\% for nested array digital CE to $11$\% for PACE and $\approx |\mathcal{G}|/K < 5\%$ for CACE. 

\section{Conclusions} \label{sec_conclusions}
This paper proposes a novel CE technique called CACE for designing the RX beamformer in massive MIMO systems. CACE enables both RX beamforming and phase-noise cancellation at very low CE overhead. 
The performance analysis suggests that in sparse channels and for low-pass filter bandwidth parameter $\hat{g} \gg 1$, the SINR with CACE scales linearly with the number of receive antennas $M_{\rm rx}$. The analysis and simulations also show that $\hat{g}$ yields a trade-off between phase-noise induced ICI and noise accumulation. Simulations suggest that CACE suffers only a small degradation in beamforming gain in comparison to digital CE based beamforming in sparse channels, and is resilient to TX-RX oscillator frequency mismatch. In comparison to other ACE schemes, CACE performs marginally worse than PACE at high SNR but performs much better at lower SNR. It also performs much better than MA-FSR, albeit at a higher RX hardware complexity. Finally, CACE also provides phase-noise suppression unlike most other CE schemes. The CE overhead reduction with CACE is significant, especially when downlink CE with feedback is required. The IA latency reduction with CACE aided beamforming is also discussed. While base-band phase shifters are sufficient for a CACE based RX unlike in conventional analog beamforming, $2M_{\rm rx}$ mixers may be required for the base-band conversion at the RX, thus adding to the hardware cost.

\begin{appendices}
\section{} \label{appdix1}
\begin{proof}[Proof of Lemma \ref{Lemma_PN_properties}]
Note that from the definition of $\Omega[k]$, we have $e^{- {\rm j}\theta[n]} \stackrel{\mathcal{F}}{\longrightarrow} \Omega[k]$ and $e^{{\rm j}\theta[n]} \stackrel{\mathcal{F}}{\longrightarrow} \Omega^{*}[-k]$, where $\mathcal{F}$ represents the nDFT Operation. Then using convolution property of the nDFT, we have:
\begin{eqnarray}
e^{- {\rm j}\theta[n]} e^{{\rm j}\theta[n]} \stackrel{\mathcal{F}}{\longrightarrow} \sum_{a \in \mathcal{K}} \Omega[a] \Omega^{*}[a+k] \nonumber \\
\Rightarrow 1 \stackrel{\mathcal{F}}{\longrightarrow} \sum_{a \in \mathcal{K}} \Omega[a] \Omega^{*}[a+k] \nonumber \\
\Rightarrow \delta^{K}_{0,k} = \sum_{a \in \mathcal{K}} \Omega[a] \Omega^{*}[a+k] \nonumber
\end{eqnarray}
which proves property \eqref{eqn_lemma_1}. Property \eqref{eqn_lemma_3} can be obtained as follows:
\begin{flalign}
& \Delta_{k_1,k_2} \triangleq \mathbb{E}\{\Omega[k_1] {\Omega[k_2]}^{*}\} &\nonumber \\
& \qquad = \frac{1}{K^2} \sum_{\dot{n}, \ddot{n} = 0}^{K-1} \mathbb{E}\{ e^{-{\rm j}[\theta[\dot{n}]-\theta[\ddot{n}]} \} e^{-{\rm j}2 \pi \frac{[k_1 \dot{n} - k_2 \ddot{n}]}{K}} & \nonumber \\
& \qquad \stackrel{(1)}{=} \frac{1}{K^2} \sum_{\dot{n}, \ddot{n} = 0}^{K-1} e^{-\frac{\sigma_{\theta}^2 |\dot{n}-\ddot{n}| T_{\rm s}}{2K}} e^{-{\rm j}2 \pi \frac{[k_1 \dot{n} - k_2 \ddot{n}]}{K}} & \nonumber \\
& \qquad \stackrel{(2)}{=} \frac{1}{K^2} \sum_{\ddot{n} = 0}^{K-1} \sum_{u = -\ddot{n}}^{K-1-\ddot{n}} e^{-\frac{\sigma_{\theta}^2 |u| T_{\rm s}}{2K}} e^{-{\rm j}2 \pi \frac{[k_1 u + (k_1- k_2) \ddot{n}]}{K}} & \nonumber \\
& \qquad \stackrel{(3)}{\approx} \frac{1}{K^2} \sum_{\ddot{n} = 0}^{K-1} \sum_{u = -K/2}^{ K/2-1} e^{-\frac{\sigma_{\theta}^2 |u| T_{\rm s}}{2K}} e^{-{\rm j}2 \pi \frac{[k_1 u + (k_1- k_2) \ddot{n}]}{K}} & \nonumber \\
& \qquad \stackrel{(4)}{=} \frac{\delta_{k_1,k_2}^{K}}{K} \bigg[ \frac{1 - e^{-(\frac{\sigma_{\theta}^2 T_{\rm s} - {\rm j} 4 \pi k_1}{4} )}}{e^{\frac{\sigma_{\theta}^2 T_{\rm s} - {\rm j}4\pi k_1}{2K}} - 1} + \frac{1 - e^{-(\frac{\sigma_{\theta}^2 T_{\rm s} + {\rm j}4 \pi k_1}{4} )}}{1 - e^{-\frac{\sigma_{\theta}^2 T_{\rm s} + {\rm j}4 \pi k_1}{2K}}} \bigg] & \label{eqn_lemma1_proof1}
\end{flalign}
where $\stackrel{(1)}{=}$ follows by using the expression for the characteristic function of the Gaussian random variable $\theta[\dot{n}] - \theta[\ddot{n}]$; $\stackrel{(2)}{=}$ follows by defining $u = \dot{n} - \ddot{n}$ and $\stackrel{(3)}{\approx}$ follows by changing the inner summation limits which is accurate for $\sigma_{\theta}^2 T_{\rm s} \gg 1$ and $\stackrel{(4)}{=}$ follows from the expression for the sum of a geometric series. 
\end{proof}

\section{} \label{appdix3}
\begin{proof}[Proof of Lemma \ref{Lemma_N_properties}]
Note that each component of $\tilde{\mathbf{w}}(t)$ is independent and identically distributed as a circularly symmetric Gaussian random process. Hence its nDFT coefficients, obtained as $\mathbf{W}[k] = \frac{1}{K} \sum_{n=0}^{K-1} \tilde{\mathbf{w}}(n T_{\rm s}/K) e^{-{\rm j}2 \pi \frac{k n}{K}}$ are also jointly Gaussian and circularly symmetric. For these coefficients at RX antennas $a,b$ we obtain:
\begin{flalign}
& \mathbb{E}\{ W_a[k_1] W_b[k_2] \} & \nonumber \\
& \quad = \frac{1}{K^2}\sum_{n_1,n_2 = 1}^{K} \mathbb{E}\{ \tilde{w}_{a}(n_1 T_{\rm s}/K) \tilde{w}_{b} (n_2 T_{\rm s}/K)\} e^{-{\rm j}2 \pi \frac{k_1 n_1 + k_2 n_2}{K}} & \nonumber \\
& \quad = 0 & \\
& \mathbb{E}\{ W_a[k_1] W^{*}_b[k_2] \} & \nonumber \\
& \quad = \frac{1}{K^2}\sum_{n_1,n_2 = 1}^{K} \mathbb{E}\{ \tilde{w}_{a}(n_1 T_{\rm s}/K) \tilde{w}_{b}(n_2 T_{\rm s}/K)^{*}\} e^{-{\rm j}2 \pi \frac{k_1 n_1 - k_2 n_2}{K}} & \nonumber \\
& \quad = \frac{\delta^{\infty}_{a,b}}{K^2} \sum_{n_1,n_2 = 1}^{K} R_{\tilde{w}}\big([n_1-n_2] T_{\rm s}/K\big) e^{-{\rm j}2 \pi \frac{k_1 n_1 - k_2 n_2}{K}} & \nonumber \\
& \quad = \delta^{\infty}_{a,b} \delta^{K}_{k_1,k_2} {\mathrm{N}_0} \big/ {T_{\rm s}} & \nonumber
\end{flalign}
where we use the auto-correlation function of the channel noise at any RX antenna as: $R_{\tilde{w}}(t) = \mathrm{N}_0 \sin( \pi K t/T_{\rm s}) \exp{\{-{\rm j} \pi (K_1-K_2)t/T_{\rm s}\}} \Big/{\pi t}$. 
%%%
\end{proof}

\section{} \label{appdixOU}
Here we model the RX phase-noise $\theta(t)$ as a zero mean Ornstein-Ulhenbeck (OU) process \cite{Doob1942}, which is representative of the output of a type-1 phase-locked loop with a linear phase detector \cite{Viterbi_book, Mehrotra2002, Petrovic2007}. For such a model, $\theta(t)$ satisfies:
\begin{eqnarray}
\frac{{\rm d}\theta(t)}{{\rm d}t} = - \eta_{\theta} \theta(t) + \sigma_{\theta} w_{\rm \theta}(t)
\end{eqnarray}
where, $w_{\theta}(t)$ is a standard real white Gaussian process, and $\eta_{\theta}, \sigma_{\theta}$ are system parameters. 
From \eqref{eqn_PN_model} it can be shown that $\theta(t)$ is a stationary Gaussian process (in steady state), with an auto-correlation function given by: $R_{\theta}(\tau) = \mathbb{E}\{ \theta(t) \theta(t + \tau)\} = \frac{\sigma_{\theta}^2}{2 \eta_{\theta}}e^{- \eta_{\theta} |\tau|}$ \cite{Petrovic2007}. 

\begin{lemma} \label{Lemma_OU_PN_properties}
For phase-noise modeled as an OU process we have:
\begin{subequations} \label{eqn_PN_OU_lemma}
\begin{flalign}
& \qquad \sum_{k \in \mathcal{K}} \Omega[k] {\Omega[k+k_1]}^{*} = \delta^{K}_{0,k_1}, & \label{eqn_OU_lemma_1} \\
& \qquad \Delta_{k_1,k_2} \triangleq \mathbb{E}\{\Omega[k_1] {\Omega[k_2]}^{*}\} & \nonumber \\
& \qquad \qquad \ \approx \frac{\delta^{K}_{k_1,k_2} e^{-R_{\theta}[0]}}{K} \sum_{u = -\lfloor K/2 \rfloor}^{\lfloor (K-1)/2 \rfloor} e^{R_{\theta}[u]} e^{-{\rm j}2 \pi \frac{k_1 u}{K}} & \label{eqn_OU_lemma_3} 
\end{flalign}
for arbitrary integers $k_1,k_2$, where $\delta^{K}_{a,b} = 1$ if $a=b \ ({\rm mod} \ K)$ or $\delta^{K}_{a,b} = 0$ otherwise, and $R_{\theta}[n] \triangleq \mathbb{E}\{ \theta[\dot{n}] \theta[\dot{n}+n]\} = \frac{\sigma_{\theta}^2}{2 \eta_{\theta}}e^{- \eta_{\theta} |n T_{\rm s}/K|}$. 
\end{subequations}
\end{lemma}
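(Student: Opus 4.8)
The plan is to follow, almost verbatim, the proof of Lemma~\ref{Lemma_PN_properties} in Appendix~\ref{appdix1}, substituting the stationary Ornstein--Uhlenbeck statistics wherever the Wiener statistics were used. For \eqref{eqn_OU_lemma_1} nothing needs to be redone: that identity is purely deterministic, using only that $\Omega[k]$ is the nDFT of the unit-modulus sequence $e^{-{\rm j}\theta[n]}$, hence $e^{{\rm j}\theta[n]}$ has nDFT $\Omega^{*}[-k]$, and that the nDFT of the product $e^{-{\rm j}\theta[n]}e^{{\rm j}\theta[n]}\equiv 1$ equals $\delta^{K}_{0,k}$. Since no distributional assumption on $\theta(t)$ enters, I would simply reproduce the three-line argument used for \eqref{eqn_lemma_1}.

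For \eqref{eqn_OU_lemma_3}, I would first write $\Delta_{k_1,k_2}=\frac{1}{K^2}\sum_{\dot n,\ddot n=0}^{K-1}\mathbb{E}\{e^{-{\rm j}(\theta[\dot n]-\theta[\ddot n])}\}\,e^{-{\rm j}2\pi(k_1\dot n-k_2\ddot n)/K}$, exactly as in the Wiener case. The one genuinely new ingredient is the evaluation of $\mathbb{E}\{e^{-{\rm j}(\theta[\dot n]-\theta[\ddot n])}\}$: since $\theta(t)$ is now zero-mean and stationary, $\theta[\dot n]-\theta[\ddot n]$ is zero-mean Gaussian with variance $2\bigl(R_\theta[0]-R_\theta[\dot n-\ddot n]\bigr)$, so its characteristic function gives $\mathbb{E}\{e^{-{\rm j}(\theta[\dot n]-\theta[\ddot n])}\}=e^{-R_\theta[0]}e^{R_\theta[\dot n-\ddot n]}$ (here using that $R_\theta[\cdot]$ is even). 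The change of variables $u=\dot n-\ddot n$ then yields $\Delta_{k_1,k_2}=\frac{e^{-R_\theta[0]}}{K^2}\sum_{\ddot n=0}^{K-1}\sum_{u=-\ddot n}^{K-1-\ddot n}e^{R_\theta[u]}e^{-{\rm j}2\pi(k_1u+(k_1-k_2)\ddot n)/K}$, the direct analog of step $\stackrel{(2)}{=}$ in Appendix~\ref{appdix1}.

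The final step — replacing the $\ddot n$-dependent inner limits by the fixed symmetric window $u\in\{-\lfloor K/2\rfloor,\dots,\lfloor(K-1)/2\rfloor\}$, after which the $\ddot n$-sum decouples and contributes $K\,\delta^{K}_{k_1,k_2}$ — is where I expect the only real subtlety, and it must be argued differently than the corresponding step $\stackrel{(3)}{\approx}$ in the Wiener proof. There the summand $e^{-\sigma_\theta^2|u|T_{\rm s}/(2K)}$ itself decays in $|u|$, making the window choice immaterial when $\sigma_\theta^2 T_{\rm s}\gg 1$; here $e^{R_\theta[u]}\to 1$ as $|u|$ grows, so instead I would split $e^{R_\theta[u]}=1+(e^{R_\theta[u]}-1)$, observe that the excess $e^{R_\theta[u]}-1$ is concentrated within $|u|\lesssim K/(\eta_\theta T_{\rm s})$ (short correlation time, the OU counterpart of $\sigma_\theta^2 T_{\rm s}\gg 1$) so that shifting its window is negligible, while the constant ``$1$'' term sums to $K\,\delta^{K}_{k_1,0}$ against the oscillatory kernel over any length-$K$ window and thus causes no error. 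Collecting terms gives $\Delta_{k_1,k_2}\approx\frac{\delta^{K}_{k_1,k_2}e^{-R_\theta[0]}}{K}\sum_{u=-\lfloor K/2\rfloor}^{\lfloor(K-1)/2\rfloor}e^{R_\theta[u]}e^{-{\rm j}2\pi k_1u/K}$, which is \eqref{eqn_OU_lemma_3}.

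Note that, unlike in \eqref{eqn_lemma_3}, I would stop here rather than evaluating the $u$-sum in closed form: with $e^{R_\theta[u]}=\exp\{\tfrac{\sigma_\theta^2}{2\eta_\theta}e^{-\eta_\theta|u|T_{\rm s}/K}\}$ the summand is no longer geometric in $u$, so the OU derivation terminates one step earlier than the Wiener one, and the residual sum is left for numerical evaluation (as in the OU analog of Fig.~\ref{Fig_verify_DFT_theorem}). The main obstacle, then, is not any new computation but making the window-replacement approximation precise in a regime where the summand does not vanish at the window edges.
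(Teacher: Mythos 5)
Your proposal is correct and follows essentially the same route as the paper's Appendix~C proof: the deterministic nDFT-convolution argument for \eqref{eqn_OU_lemma_1}, the Gaussian characteristic function of the stationary increment giving $e^{-R_\theta[0]}e^{R_\theta[u]}$, the change of variables $u=\dot n-\ddot n$, and the replacement of the $\ddot n$-dependent window by a fixed symmetric one, stopping short of a closed form exactly as the paper does. Your $1+(e^{R_\theta[u]}-1)$ decomposition is just a more carefully spelled-out version of the paper's terse justification that $R_\theta[u]\approx R_\theta[u-K]\approx 0$ for $|u|>(K-1)/2$ (both rest on the $K$-periodicity of the phase factor and the short correlation time of the OU process), so there is no substantive difference.
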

\begin{proof}[Proof of Lemma \ref{Lemma_OU_PN_properties}]
Note that from the definition of $\Omega[k]$, we have $e^{- {\rm j}\theta[n]} \stackrel{\mathcal{F}}{\longrightarrow} \Omega[k]$ and $e^{{\rm j}\theta[n]} \stackrel{\mathcal{F}}{\longrightarrow} \Omega^{*}[-k]$, where $\mathcal{F}$ represents the nDFT Operation. Then using convolution property of the nDFT, we have:
\begin{eqnarray}
e^{- {\rm j}\theta[n]} e^{{\rm j}\theta[n]} \stackrel{\mathcal{F}}{\longrightarrow} \sum_{a \in \mathcal{K}} \Omega[a] \Omega^{*}[a+k] \nonumber \\
\Rightarrow 1 \stackrel{\mathcal{F}}{\longrightarrow} \sum_{a \in \mathcal{K}} \Omega[a] \Omega^{*}[a+k] \nonumber \\
\Rightarrow \delta_{0,k}^{K} = \sum_{a \in \mathcal{K}} \Omega[a] \Omega^{*}[a+k] \nonumber
\end{eqnarray}
which proves property \eqref{eqn_lemma_1}. Property \eqref{eqn_lemma_3} can be obtained as follows:
\begin{eqnarray}
\Delta_{k_1,k_2} & \triangleq & \mathbb{E}\{\Omega[k_1] {\Omega[k_2]}^{*}\} \nonumber \\
&=& \frac{1}{K^2} \sum_{\dot{n}, \ddot{n} = 0}^{K-1} \mathbb{E}\{ e^{-{\rm j}[\theta[\dot{n}]-\theta[\ddot{n}]} \} e^{-{\rm j}2 \pi \frac{[k_1 \dot{n} - k_2 \ddot{n}]}{K}}  \nonumber \\
&\stackrel{(2)}{=}& \frac{1}{K^2} \sum_{\ddot{n} = 0}^{K-1} \sum_{u = -\ddot{n}}^{K-1-\ddot{n}} e^{-R_{\theta}[0] + R_{\theta}[u]} e^{-{\rm j}2 \pi \frac{[k_1 u + (k_1- k_2) \ddot{n}]}{K}}  \nonumber \\
& \stackrel{(3)}{\approx} & \frac{1}{K^2} \sum_{\ddot{n} = 0}^{K-1} \sum_{u = -\lfloor K/2 \rfloor}^{\lfloor (K-1)/2 \rfloor} e^{-R_{\theta}[0] + R_{\theta}[u]} e^{-{\rm j}2 \pi \frac{[k_1 u + (k_1- k_2) \ddot{n}]}{K}}  \nonumber \\
& = & \frac{\delta_{k_1,k_2}^{K} e^{-R_{\theta}[0]}}{K} \sum_{u = -\lfloor K/2 \rfloor}^{\lfloor (K-1)/2 \rfloor} e^{R_{\theta}[u]} e^{-{\rm j}2 \pi \frac{k_1 u}{K}} \label{eqn_lemma1_OU_proof1}
\end{eqnarray}
where $\stackrel{(2)}{=}$ follows from similar steps to \eqref{eqn_lemma1_proof1} and $\stackrel{(3)}{\approx}$ follows by noting that $R_{\theta}[u]$ has a limited support around $u=0$ and hence $R_{\theta}[u] \approx R_{\theta}[u-K] \approx 0$ for $u > (K-1)/2$. Note that since $e^{- R_{\theta}[0] + R_{\theta}[u]}$ is an auto-correlation function, its nDFT is non-negative, thus ensuring that $\Delta_{k_1,k_1} \geq 0$ in \eqref{eqn_lemma1_OU_proof1}. 
\end{proof}

\end{appendices}

\bibliographystyle{ieeetr}
\bibliography{IEEEabrv,references_trial}
%\bibliographystyle{ieeetr}
%\bibliography{references}

\begin{IEEEbiography}[{\includegraphics[width=1in,height=1.25in,clip,keepaspectratio]{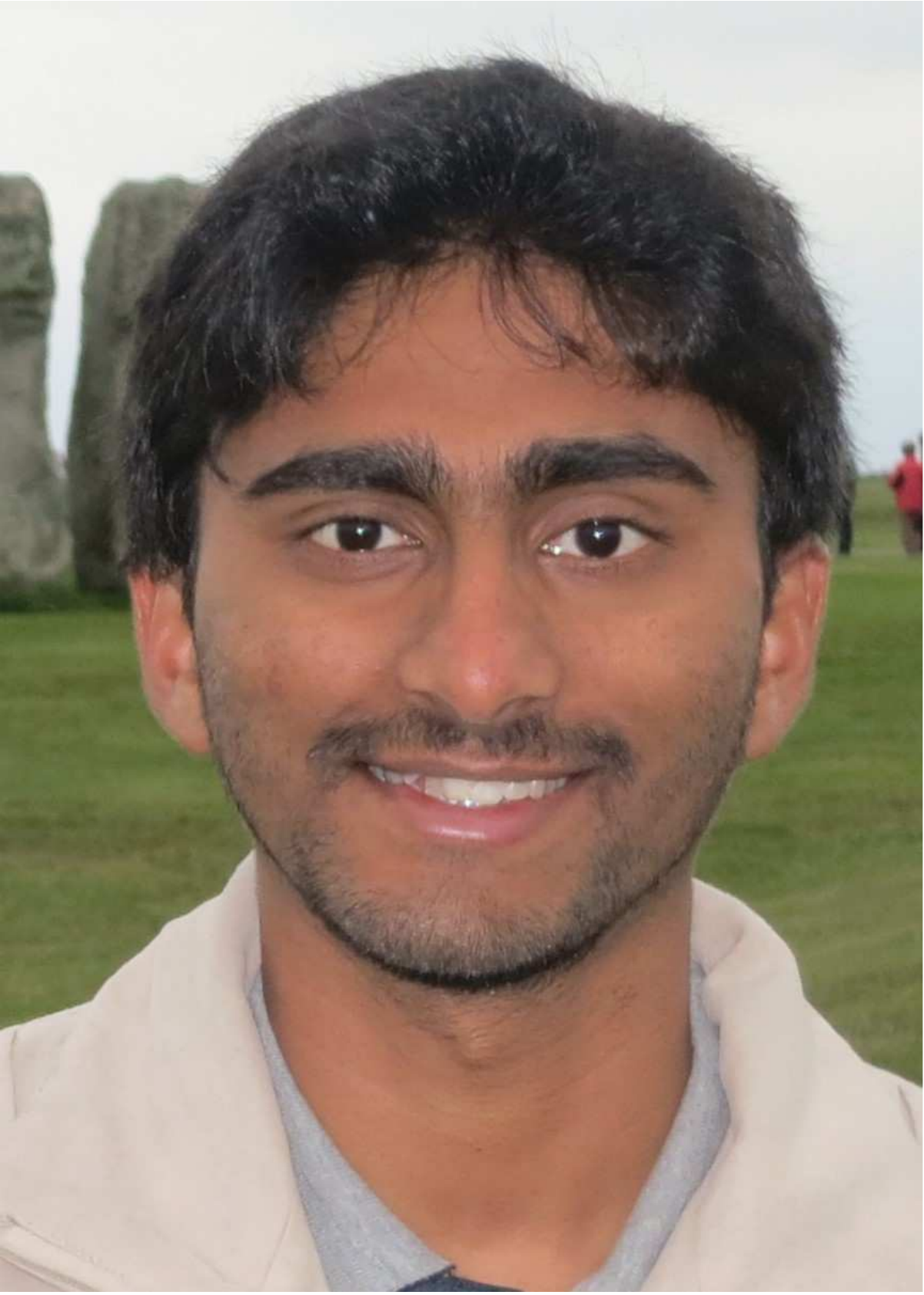}}]{Vishnu V. Ratnam}
(S'10--M'19) received the B.Tech. degree (Hons.) in electronics and electrical communication engineering from IIT Kharagpur, Kharagpur, India in 2012, where he graduated as the Salutatorian for the class of 2012. He received the Ph.D. degree in electrical engineering from University of Southern California, Los Angeles, CA, USA in 2018. He is currently a senior research engineer in the Standards and Mobility Innovation Lab at Samsung Research America, Plano, Texas, USA. His research interests are in AI for wireless, mm-wave and Terahertz communication, massive MIMO, channel estimation and manifold signal processing, and resource allocation problems in multi-antenna networks. 

Dr. Ratnam is a recipient of the Best Student Paper Award at the IEEE International Conference on Ubiquitous Wireless Broadband (ICUWB) in 2016, the Bigyan Sinha memorial award in 2012 and is a member of the Phi-Kappa-Phi honor society.
\end{IEEEbiography}

\begin{IEEEbiography}[{\includegraphics[width=1in,height=1.25in,clip,keepaspectratio]{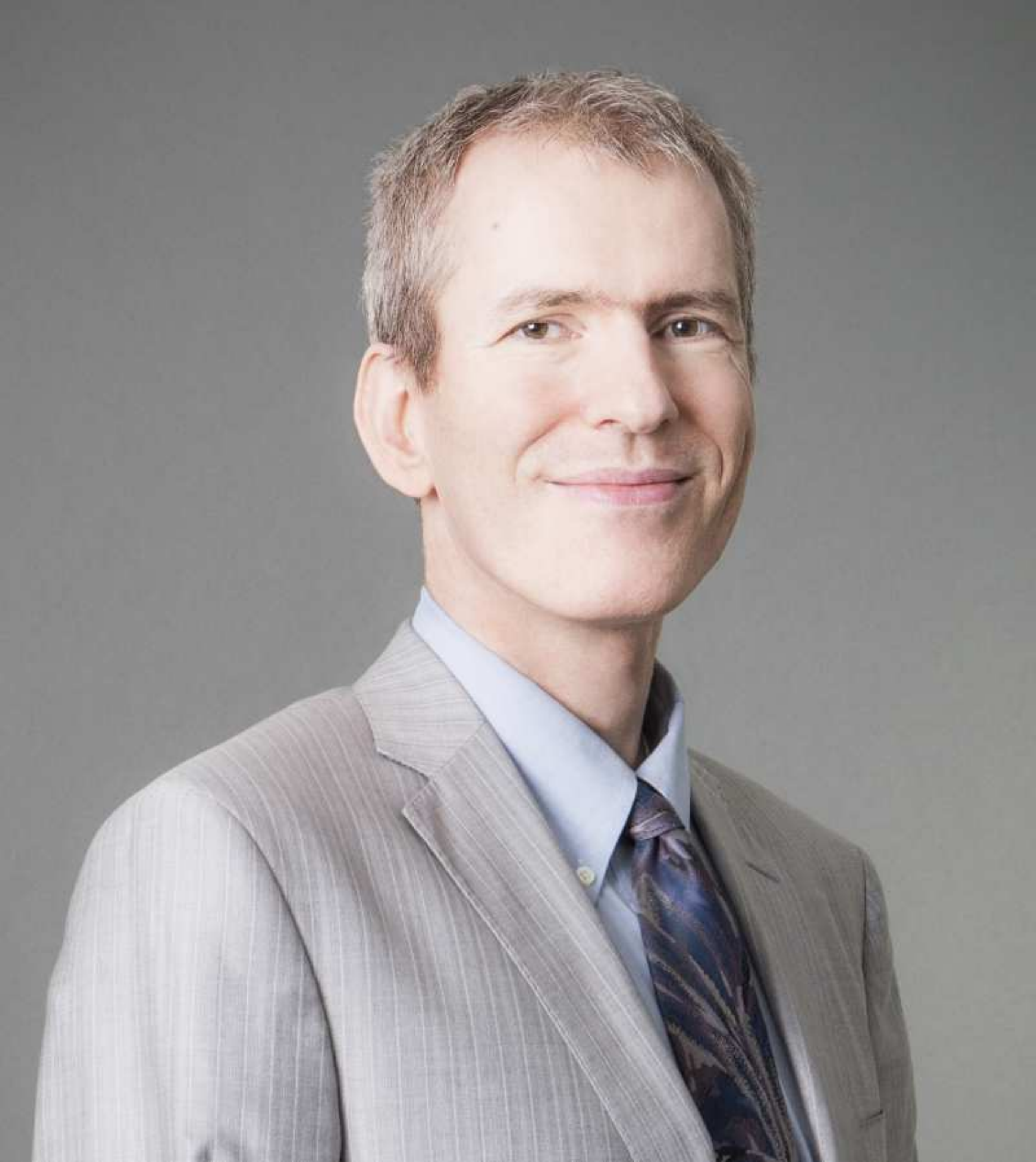}}]{Andreas F. Molisch}
(S'89--M'95--SM'00--F'05) received the Dipl. Ing., Ph.D., and habilitation degrees from the Technical University of Vienna, Vienna, Austria, in 1990, 1994, and 1999, respectively. He subsequently was with FTW (Austria), AT\&T (Bell) Laboratories Research (USA); Lund University (Sweden), and Mitsubishi Electric Research Labs (USA). He is now a Professor and the Solomon Golomb -- Andrew and Erna Viterbi Chair at the University of Southern California, Los Angeles, CA, USA. 

His current research interests are the measurement and modeling of mobile radio channels, multi-antenna systems, wireless video distribution, ultra-wideband communications and localization, and novel modulation formats. He has authored, coauthored, or edited four books (among them the textbook Wireless Communications, Wiley-IEEE Press), 20 book chapters, more than 250  journal papers, more than 340 conference papers, as well as more than 80 patents and 70 standards contributions.

Dr. Molisch has been an Editor of a number of journals and special issues, General Chair, Technical Program Committee Chair, or Symposium Chair of multiple international conferences, as well as Chairman of various international standardization groups. He is a Fellow of the National Academy of Inventors, Fellow of the AAAS, Fellow of the IET, an IEEE Distinguished Lecturer, and a Member of the Austrian Academy of Sciences. He has received numerous awards, among them the Donald Fink Prize of the IEEE, the IET Achievement Medal, the Armstrong Achievement Award of the IEEE Communications Society, and the Eric Sumner Award of the IEEE.
\end{IEEEbiography}

% that's all folks
\end{document}